\documentclass[preliminary,copyright,creativecommons]{eptcs}

\usepackage{aiml26}

\usepackage{iftex}

\ifpdf
  \usepackage[T1]{fontenc}        
\else
  \usepackage{breakurl}           
\fi

\usepackage[utf8]{inputenc}
\usepackage[T1]{fontenc}
\usepackage{amsfonts}
\usepackage{graphicx}
\usepackage{amsmath}
\usepackage{amssymb}
\usepackage[dvipsnames]{xcolor}
\usepackage{float}
\usepackage{esvect}
\usepackage{mathtools}
\usepackage{subcaption}
\usepackage{stmaryrd}
\usepackage{comment}
\usepackage{wasysym}
\usepackage{stmaryrd}

\DeclareMathAlphabet{\mathcal}{OMS}{cmsy}{m}{n}
\usepackage{tikz}
\usetikzlibrary{positioning,arrows,shapes,snakes,automata,backgrounds,petri}
\usepackage{forest}
\usepackage{caption}
\catcode`@=11
\def\diamondplus{\mathbin{\mathpalette\diamondplus@\relax}}
\def\diamondplus@#1#2{%
  \vcenter{%
    \hbox{%
      \setbox\z@=\hbox{$\m@th#1\oplus$}%
      \dimen@=\ht\z@ \advance\dimen@ \dp\z@
      \resizebox{!}{\dimen@}{%
        \rotatebox[origin=c]{45}{$\m@th#1\boxtimes$}%
      }
    }
  }
}
\catcode`@=12

\newcommand{\nigi}[1]{\textcolor{magenta}{Nina says: #1}}

\title{Strategies in Sabotage Games:\\ Temporal and Epistemic Perspectives}
\author{Nina Gierasimczuk \qquad\qquad Katrine B. P. Thoft 
\institute{Technical University of Denmark\\
Kgs. Lyngby, Denmark}
\email{\quad nigi@dtu.dk \quad\qquad kabpt@dtu.dk}
}

\newcommand{\titlerunning}{Strategies in Sabotage Games}
\newcommand{\authorrunning}{N. Gierasimczuk \& K.B.P. Thoft}

\hypersetup{
  bookmarksnumbered,
  pdftitle    = {\titlerunning},
  pdfauthor   = {\authorrunning},
  pdfsubject  = {EPTCS},               
  pdfkeywords = {keyword1, keyword2} 
}

\begin{document}
\maketitle

\begin{abstract}
Sabotage games are played on a dynamic graph, in which one agent, called a runner, attempts to reach a goal state, while being obstructed by a demon who at each round removes an edge from the graph. Sabotage modal logic was proposed to carry out reasoning about such games. Since its conception, it has undergone a thorough analysis (in terms of complexity, completeness, and various extensions) and has been applied to a variety of domains, e.g., to formal learning. In this paper, we propose examining the game from a temporal perspective using alternating time temporal logic (ATL$^\ast$), and address the players' uncertainty in its epistemic extensions. This framework supports reasoning about winning strategies for those games, and opens ways to address temporal properties of dynamic graphs in general.
\end{abstract}

\section{Introduction}
Many real-world problems can be modelled as graphs undergoing structural changes. A common example is the connectivity of a train transportation network: \textit{Can I reach Copenhagen from Amsterdam despite a train connection being cancelled?} Being able to answer such questions is crucial for making sense of agents' behaviour in a network. Sabotage games are an abstract model of such scenarios: one player (the runner) attempts to reach a specified target vertex in a graph, while an adversary (the blocker) removes edges in the graph. Traditionally, sabotage games have been analyzed using Sabotage Modal Logic (SML) introduced in \cite{Benthem:2005,Loding:2003}. Returning to our train network example, in SML the possibility of travelling to Copenhagen ($Cph$) from Amsterdam, after a cancellation of a train connection, is expressed as $\blacklozenge\lozenge Cph$ being true while in Amsterdam, which says: after a deleting an arbitrary edge in the train network, there will (still) be a connection between Amsterdam and Copenhagen. 

The elegant approach of SML does not account of temporal and epistemic aspects of sabotage games (see \cite{vanBenthem:2010}).
The overall aim of this paper is to provide a unified logical framework where the strategic temporal reasoning native to Alternating-time Temporal Logic (ATL, \cite{Alur:2002}), and the adversarial structural change in SML interact, in order to study how sabotage unfolds over time in dynamic games with imperfect information. 

The paper proceeds as follows. In Section \ref{sec:sabotage_games} we present Sabotage Games: the classical (reachability) version and the newly introduced liveness sabotage game. We also present SML characterizations of the existence of winning strategies. In Section \ref{sec:temporal} we reconstruct Sabotage Games in Alternating-time Temporal Logic setting, as sabotage game structures, and we show how to express the existence of winning strategies ATL$^\ast$. We also discuss how the strategic temporal setting of ATL$^\ast$ allows expressing the graph-theoretical notion of (dynamic) minimal $s-t$ cut of a graph. We also discuss an extension to angelic sabotage games that feature the addition of edges. In Section \ref{sec:atel}, we show how the epistemic extension of ATL can be used to reason about knowledge of strategic ability in sabotage games. Finally, in Section \ref{sec:conclusion}, we conclude and outline several directions of future work.

\section{Sabotage games: reachability and liveness}\label{sec:sabotage_games}
The classic sabotage game was first introduced in \cite{Benthem:2005} and \cite{Loding:2003} as a turn-based two-player game $(G,v_0, v_g)$ played on a graph $G=(V,E)$, with $v_0,v_g\in V$ being some pre-specified start and goal vertices. 
The first player in the game is the runner, who is positioned at the designated start vertex and continues moving along the edges of the graph until she reaches either the goal vertex $v_g$ or she arrives at a dead-end. The runner can move along exactly one edge in each turn. The second player is the demon (often called `blocker'), who acts as a saboteur and, in each turn, removes an arbitrary edge from the graph. Since the aim of the runner is to reach the goal, in this paper we will call such games \emph{reachability sabotage games} (RSG, for short). Motivated by concurrency theory \cite{Murata:1989,Chatterjee:2007}, apart from the the RSGs, in this paper we introduce and study another variant of the game: \emph{liveness sabotage games} (LSG, for short). Instead of the pre-specified goal vertex, LSG has a \emph{liveness parameter} $b\in \mathbb{N}$, that denotes the number of moves the runner should (at least) be able to make before she gets completely blocked by the demon, i.e., a lower bound of how long the runner can stay `alive'. An LSG is then specified by $(G,v_0,b)$, where $G=(V,E)$, $v_0\in V$ is the start vertex, and $b$ is the liveness parameter. 
Throughout this paper we will assume the underlying $G=(V,E)$ to be a directed graph with $V$ and $E$ non-empty, we designate some $v_0\in V$ to be the starting vertex and sometimes $v_g\in V$ to be the goal vertex. We will use $|\cdot|$ to denote the cardinality of a set. We will also assume a fixed enumeration of vertices from $V\setminus\{v_0\}$, $(v_i)_{i\in\{1,\ldots, |V|-1\}}$, and a fixed enumeration of edges from $E$, $(e_j)_{j\in\{1,\ldots, |E|\}}$. 

The sabotage game proceeds through a sequence of game-states---each can be characterized by a set of (remaining) edges and a distinguished vertex corresponding to runner's current position.
\begin{definition}\label{def:states}
The set of \emph{sabotage game-states} of $G=(V,E)$ is $\mathcal{S}(G)=\{(E',v)\mid E'\subseteq E, v\in V\}$.\footnote{Whenever $G$ is clear from context, we will drop the argument $G$ and refer to the set of game-states with $\mathcal{S}$.}
 
\end{definition} 

A sabotage play is a specific sequence of such game-states.

\begin{definition}\label{def:play}
Let $G=(V,E)$ and $v_0\in V$ be a starting vertex. A \emph{strictly turn-based sabotage $v_0$-play} is a sequence of game-states $s^0,\ldots, s^n$, such that $s^0=(E^0,v^0)$ is given by $(E,v_0)$, and for any $k$, such that $0<k\leq n$: $$s^{k}=
   \begin{cases}
   (E^{k-1},v^{k}),\text{  s.t.  } (v^{k-1},v^{k})\in E^{k-1} & \text{if } k \text{ odd},\\
   (E^{k-1}\setminus \{(x,y)\},v^{k-1}),\text{ s.t. }(x,y)\in E^{k-1} & \text{otherwise}.\\
   \end{cases}
   $$ 
   \end{definition}
A sabotage match is a sabotage play whose final game-state is the first one in the play that satisfies the winning conditions of the game, different for reachability and liveness versions of the game.

\begin{definition}\label{def:match}
\item A \emph{turn-based sabotage match} is a turn-based sabotage play $s^0,\ldots,s^n$, with $s^n=(E^n,v^n)$ such that: 
   \begin{enumerate}
\item \emph{for $RSG$ $(G,v_0,v_g)$:} for all $k<n$ we have $v^k\neq v_g$, and either $v^n=v_g$ (i.e., runner wins), or $v^n\neq v_g$ and there is no $v\in V$ such that $(v^n,v)\in E^n$ (i.e., demon wins);
\item \emph{for $LSG$ $(G,v_0,b)$:} either $n\geq b$ (runner wins), or $n<b$ and there is no $v\in V$ such that $(v^n,v)\in E^n$ (i.e., demon wins).
\end{enumerate}
\end{definition}

\paragraph{Sabotage Modal Logic}\label{sec:sab_log}

After establishing some preliminary intuitions about reachability and liveness turn-based sabotage games, we are now ready to talk about the logic customarily used to describe them. Let us briefly recall the classical Sabotage Modal Logic (SML, \cite{Benthem:2005,Loding:2003}).

\textbf{SML Syntax} Let $\Phi$ be a finite set of propositions and $p\in \Phi$. The syntax of SML is given by:
$\varphi::= \top \, |\, p \,| \,\neg \varphi \,| \, \varphi \land \varphi\,
| \, \lozenge \varphi\,
| \, \blacklozenge \varphi $. Standardly, we will also use $\bot$ for $\neg \top$, $\varphi\vee\psi$ for $\neg(\neg \varphi \wedge \neg \psi)$, $\Box \varphi$ for $\neg\lozenge\neg\varphi$, and $\blacksquare \varphi$ for $\neg \blacklozenge \neg \varphi$. As usual, the formula $\lozenge \varphi$ stands for the possibility of transitioning to an accessible vertex where $\varphi$ holds. The $\blacklozenge$-operator, the sabotage modality, denotes the removal of an \emph{arbitrary} edge in the graph.

\textbf{SML Semantics}
The language of SML is interpreted over \emph{sabotage models} $M=(W,R,\textup{\texttt{Val}})$, where $W$ is a finite non-empty set of worlds, $R\subseteq W\times W$ is a binary relation over $W$, and $\textup{\texttt{Val}}: \Phi \rightarrow \mathcal{P}(W)$. The truth of the formulas of SML is defined locally at pairs $(M,s)$. Let $M=(W,R,\textup{\texttt{Val}})$ be a sabotage model, $v\in W$ and $p\in \Phi$. The clauses for the standard ML part of the language are as usual, and for $\blacklozenge$ we have:
$ M,v \models \blacklozenge \varphi  \text{ iff there is a  } (x,y)\in R \text{ s.t. } M^{-}_{(x,y)},v \models \varphi, $
where $M^{-}_{(x,y)}$ is the updated model after deleting an edge between $x$ and $y$, i.e.,  $M^{-}_{(x,y)}=(W,R\setminus\{(x,y)\},V)$. In other words, the formula $ \blacklozenge \varphi$ is true in a model $M$ and a state $v$ if, after removing some edge, $\varphi$ is true at the resulting model at the state $v$.

SML can be used to express properties of sabotage games. Let the set of propositions be $\Phi=\{r,g\}$. Any game-state $s=(E,v)$ of a sabotage game can be transformed into a sabotage model $M(s)=(V, E, \texttt{Val})$, where $\texttt{Val}(r)=\{v\}$, and (in the case of $RSG$ where a goal vertex is specified) $\texttt{Val}(g)=\{v_g\}$. Intuitively speaking, $g$ labels the goal vertex, and $r$ stands for the current position of the runner.\footnote{Since these propositions are true in unique worlds, they can be seen as index propositions of \emph{hybrid logic} (see, e.g., \cite{blackburn1995hybrid}). Note however that multiple worlds making $g$ and $r$ true could, in a natural way, represent distributed goals and multiple runners.}  
In \cite{Benthem:2005}, it was observed that, given this represention, SML allows expressing the existence of a winning strategy for the runner starting at the initial vertex $v_0$. 
    

\begin{proposition}[\cite{Benthem:2005}]\label{prop:reachabilityWin}
Let $G=(V,E)$ with $|E|=k$. Runner has a winning strategy in $RSG$ $((V,E),v_0,v_g)$ iff $M(s^0),v_0\models \rho_k$, where $\rho_i$ (for $i\in \mathbb{N}$) is defined inductively as $\rho_0:=g, \ \rho_{n+1}:= g\vee \lozenge\blacksquare\rho_n$.\footnote{The proof of this proposition, together with different interpretations and variations of this game can be found in \cite{Gierasimczuk:2009}.}
\end{proposition}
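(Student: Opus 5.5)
The plan is to prove a stronger, parametrized statement by induction on $n$, and then specialize it to $n=k$. For a game-state $s=(E',v)$ in which it is the runner's turn, define the claim $P(n)$: for every $E'\subseteq E$ and every $v\in V$, runner has a strategy from $(E',v)$ that reaches $v_g$ within at most $n$ runner moves (against every demon behaviour) iff $M((E',v)),v\models\rho_n$. The proposition then follows from two facts. First, $P(k)$ holds. Second, with $|E|=k$ edges in total, having \emph{some} winning strategy is equivalent to having one that wins within $k$ runner moves. This second fact holds because each runner move is followed by a demon deletion. After $k$ rounds the demon has deleted all $k$ edges, so the runner can make at most $k$ moves in total (strictly, at most $|E'|$ moves from state $(E',v)$). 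Any winning strategy therefore wins within $k$ moves.

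The base case $P(0)$ is immediate. Winning within $0$ moves means $v=v_g$, and $\rho_0=g$ holds at $v$ exactly when $v\in\texttt{Val}(g)=\{v_g\}$.

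For the inductive step, assume $P(n)$ and unfold $\rho_{n+1}=g\vee\lozenge\blacksquare\rho_n$ using the SML semantics in the model $M((E',v))$. The formula holds at $v$ iff one of two things is true. Either $v=v_g$, or there is some $u$ with $(v,u)\in E'$ such that for every edge $(x,y)\in E'$ we have $M^{-}_{(x,y)},u\models\rho_n$.

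We must check that $M^{-}_{(x,y)}$ evaluated at $u$ is exactly $M((E'\setminus\{(x,y)\},u))$ as far as the formula is concerned. The proposition $r$ does not occur in $\rho_n$, so only the edge relation and $g$ matter, and these coincide. By $P(n)$, this condition says precisely the following: there is a runner move to $u$ such that after every possible demon deletion the runner can win within $n$ further moves. That is exactly the recursive characterization of winning within $n+1$ moves under Definitions \ref{def:play} and \ref{def:match}. Here one also checks that a match ending at $v_g$ stops immediately, which matches the disjunct $g$.

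The step I expect to need the most care is the edge case in which $E'$ becomes empty after the runner's move. In that situation $\blacksquare\rho_n$ holds vacuously, while in the game the demon has no edge to remove. I would resolve this by reading the game so that the demon then passes, or so that the runner at $u$ with no edges is stuck unless $u=v_g$. With that reading I would check that $\rho_n$ at $u$ with no edges is equivalent to $u=v_g$: every $\lozenge$ fails, so only the $g$ disjunct survives. This makes the vacuous truth harmless, because it hands the evaluation to $\rho_n$ on the same edgeless model. The dead-end clause for the demon win corresponds to $\lozenge$ failing when $v$ has no outgoing edges in $E'$, which is consistent with this reading.
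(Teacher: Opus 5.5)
Your proof is correct. The paper does not prove this proposition itself; it defers to \cite{Benthem:2005} and \cite{Gierasimczuk:2009}. However, its appendix proof of the liveness analogue, Proposition~\ref{prop:livenessWin}, runs the same induction you use: unfold $\lozenge\blacksquare$, then apply the hypothesis to the game on $E\setminus\{(x,y)\}$ from the runner's new vertex.

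The difference is the induction parameter. You induct on a bound $n$ on runner moves. This costs you a separate counting step (there are at most $|E|$ runner moves, so winning and winning within $k$ moves coincide). In return it gives the stronger fact that $\rho_n$ characterizes winning within $n$ moves for every $n$ and every $E'$. Inducting directly on $|E'|$ avoids the counting step. There you prove that the runner wins from $(E',v)$ iff $M((E',v)),v\models\rho_{|E'|}$. The index then matches for free, since the runner's move leaves the edge set unchanged and the demon's deletion lowers $|E'|$ by exactly one, mirroring the step from $\rho_{m+1}$ to $\rho_m$.

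Your observation that $r$ does not occur in $\rho_n$ is worth keeping. It is what lets you identify $M^{-}_{(x,y)}$ evaluated at $u$ with $M((E'\setminus\{(x,y)\},u))$, a point the paper's liveness proof passes over silently.

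Finally, the edge case you single out cannot arise. The runner moves along some $(v,u)\in E'$ and her move deletes nothing, so the demon always has at least that edge available. Likewise, the $\blacksquare$ in $\lozenge\blacksquare\rho_n$ is evaluated in a model whose relation contains $(v,u)$, so it is never vacuously true.
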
 
A similar SML characterization be obtained for our newly introduced liveness sabotage game.

\begin{proposition}\label{prop:livenessWin}
Let $G=(V,E)$ and $b\in \mathbb{N}^+$. Runner has a winning strategy in $LSG=((V,E),v_0,b)$ iff $M((E,v_0)),v_0\models \gamma_b$, with $\gamma_i$ (for $i\in \mathbb{N}$) given by:
  $ \gamma_1:=\lozenge\top, \ \gamma_{n+1}:=\lozenge \blacksquare\gamma_n.$\footnote{The proofs of propositions can be found in Appendix.}

\end{proposition}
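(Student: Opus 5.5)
The plan is to prove a stronger, state-relative statement by induction on the liveness parameter, and then specialise it to the initial state $(E,v_0)$. For a game-state $(E',v)\in\mathcal{S}$, say that the runner can \emph{survive $n$ rounds from $(E',v)$} if she has a strategy in the game starting at $(E',v)$, where runner moves and demon deletions alternate, that guarantees she performs at least $n$ runner moves. The claim to prove is
\[
\text{runner can survive } n \text{ rounds from } (E',v) \iff M((E',v)),v\models\gamma_n ,
\]
for all $n\geq 1$ and all $(E',v)$. Proposition~\ref{prop:livenessWin} is the case $(E',v)=(E,v_0)$, $n=b$.

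First I have to fix how the bound in Definition~\ref{def:match} is read. The intended meaning stated before the definition is ``$b$ runner moves''. Under that reading the first runner move is at index $1$ and the $b$-th is at index $2b-1$, so the runner wins exactly when the play reaches its $b$-th runner move. I will state this reading explicitly, because $\gamma_b$ contains exactly $b$ diamonds interleaved with $b-1$ boxes.

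I also note two facts used throughout. The formulas $\gamma_n$ mention neither $r$ nor $g$, so only the edge set and the evaluation point matter. In $M((E',v))$ the relation is $E'$, and $\lozenge$ moves the evaluation point along $E'$ without changing the model. Hence $M((E',v)),v\models\lozenge\psi$ holds iff there is some $u$ with $(v,u)\in E'$ such that $M((E',u)),u\models\psi$, up to relabelling $r$. Likewise, $\blacksquare\psi$ holds at $(E',u)$ iff for every $e\in E'$ the formula $\psi$ holds at $M((E'\setminus\{e\},u)),u$.

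\textbf{Base case.} $\gamma_1=\lozenge\top$ is true iff $v$ has an outgoing edge in $E'$. This is exactly the condition for the runner to make one move.

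\textbf{Inductive step.} Unfold $\gamma_{n+1}=\lozenge\blacksquare\gamma_n$ using the two facts above. It holds iff there is a successor $u$ of $v$ in $E'$ such that, for every deletion $e\in E'$, the state $(E'\setminus\{e\},u)$ satisfies $\gamma_n$. By the induction hypothesis this says the runner has a first move after which, whatever the demon deletes, she can survive $n$ further rounds.

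For ($\Leftarrow$), compose strategies: take the witnessing $u$ as the first move, and after the demon's deletion $e$ follow the strategy given by the induction hypothesis for $(E'\setminus\{e\},u)$. For ($\Rightarrow$), the first move $u$ of a surviving strategy witnesses the diamond. The restriction of that strategy to each demon reply witnesses $\gamma_n$ at $(E'\setminus\{e\},u)$. No determinacy argument is needed, since both directions manipulate strategies directly.

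\textbf{Main obstacle.} I expect the delicate point to be the degenerate situation where, after the runner's move, the remaining edge set is empty, so the demon has no legal move. Semantically $\blacksquare\gamma_n$ is then vacuously true, but $\gamma_n$ itself would fail at $(\emptyset,u)$ because it begins with $\lozenge$. In the game, the runner is also stuck once no edges remain.

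I will resolve this by showing that the two sides still agree. If $E'=\{(v,u)\}$ is the only edge and $n\geq1$, the formula $\lozenge\blacksquare\gamma_n$ holds vacuously, while in the game the runner makes her move and then the play cannot continue. I will therefore handle this by adopting the convention that a play in which the demon has no legal move ends at that point, and by checking that the count of runner moves still matches. If the counts do not match under that convention, I will instead restrict the inductive claim to states with $|E'|\geq 1$ after each runner move and treat the empty case separately. This bookkeeping, together with the index-counting convention for $b$, is where I expect the real care to lie; the induction itself is routine.
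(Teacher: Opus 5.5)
Your proof is correct and follows the paper's own argument. Like the paper, you induct on $b$, unfold $\gamma_{n+1}=\lozenge\blacksquare\gamma_n$ into ``a runner move after which every deletion leaves a state winning for $n$'', and compose or restrict strategies. Your generalisation to arbitrary states $(E',v)$ and your reading of $b$ as the number of runner moves are exactly what the paper's inductive step uses implicitly.

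Your ``main obstacle'' does not arise. A runner move never removes the edge just traversed, so afterwards the edge set still contains $(v,u)$ and $\blacksquare\gamma_n$ is never vacuous. For $E'=\{(v,u)\}$ the formula $\lozenge\blacksquare\gamma_n$ is in fact false: the only deletion leaves the empty model, where $\gamma_n$ fails. This matches the game, in which the demon deletes $(v,u)$ and the runner is stuck after one move, so no special convention is needed.
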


The alternating structure of the modal prefix in the formulae used in Prop.~\ref{prop:reachabilityWin} and \ref{prop:livenessWin} is specific to strictly turn-based sabotage games. While expressing the winning conditions for different sequences of moves (e.g., allowing several moves of runner before demon moves), can be easily expressed in SML by iterating modalities of one kind, but each such protocol requires a different SML formula. To gain a more general perspective, it would be useful to be able to refer the strategies directly in the logical language.  
 The inductively defined formulae $\rho$ and $\gamma$ indicate that interesting properties of sabotage games hide in the time-progression of the game. After all, in theoretical computer science, `reachability' (as in RSG) and `liveness' (as in LSG) are considered \emph{temporal} properties. All this motivates the use of temporal logic to study sabotage games. Our focus on Alternating-time Temporal Logic is additionally justified by a limitation of the traditional approach---simultaneous moves of the players are beyond the scope of SML.

\vspace{-0.3cm}
\section{Temporal perspective on sabotage games}\label{sec:temporal}
In ATL and ATL$^\ast$, branching-time logic is enriched with strategic operators that allow quantifying over paths (or plays) resulting from players' strategies. ATL$^\ast$ is a multi-agent extension of CTL$^\ast$ (just as ATL is a multi-agent extension of CTL). We will now reconstruct sabotage games in the setting of alternating time temporal logic ATL$^\ast$, as sabotage game structures. Intuitively, the main difference with sabotage models is that the states of the sabotage game structures represent the whole game-state, rather than a concrete vertex in the graph. 

\paragraph{Sabotage game structures}
We will distinguish two basic kinds of sabotage games: strictly turn-based (just as the classical sabotage game discussed in Section~\ref{sec:sabotage_games}) and concurrent (in which the agents move simultaneously). Two componenents will be fixed and the same across the different types: the set of agents containing runner and demon, $\mathbb{AG}=\{r,d\}$, and the set of actions. The crucial observation about sabotage games is that both players act on the \emph{edges of the graph}. Of course, their actions have different effects: the demon choosing an edge results in a deletion of that edge, while the runner choosing an edge results in updating the runners position in the graph. We will then take the set of action to be $Act=\{e\mid e\in E\}$ (when we allow an agent to be inactive, we will extend the set with a special skip-symbol: $Act^{skip}=Act\cup\{\textup{\texttt{skip}}\}$). We will address the qualitative difference between the outcomes of the two players edge-choices when defining the transition function. To be able to `extract' vertices from our edge-notation we will at times use the projection function, so that if $e=(x,y)$, then $\pi_1(e)=x$ and $\pi_2(e)=y$.  
\begin{definition}\label{def:TurnSabotageGameStructure} 
    A \emph{turn-based \textup{(\texttt{tb-})}sabotage game structure} based on $G=(V,E)$ is a tuple: $\mathbb{S}^{\textup{\texttt{tb}}}(G) =(\mathbb{AG},Act^{\textup{\texttt{skip}}},S^{\textup{\texttt{tb}}},act^{\textup{\texttt{tb}}},\delta^{\textup{\texttt{tb}}}),$ where: $S^{\textup{\texttt{tb}}}=\mathcal{S}(G)\times \mathbb{AG}$ is a finite set of (duplicated) game-states, each labeled with either runner or demon; $act^{\textup{\texttt{tb}}}:\mathbb{AG}\times S^{\textup{\texttt{tb}}} \rightarrow \mathcal{P}(Act)$ is the function that determines which actions are allowed to be performed by each agent in a game-state. Specifically, given a state $t=((E',v),a)$: 
        \noindent\begin{itemize}
\item if $a=r$, then $act^{\textup{\texttt{tb}}}(r,t)=\{e\in E'\mid \pi_1(e)=v\}$ and $act^{\textup{\texttt{tb}}}(d,t)=\{\textup{\texttt{skip}}\}$;
        \item if $a=d$, then $act^{\textup{\texttt{tb}}}(d,t)=E'$ and $act^{\textup{\texttt{tb}}}(r,t)=\{\textup{\texttt{skip}}\}$; 
        \end{itemize}
Finally, $\delta^{\textup{\texttt{tb}}}$ is a transition function that assigns to any $t=((E',v),a)$ in $S^{\textup{\texttt{tb}}}$ and action profile $(e_r,e_d)$, such that $e_i\in act(i,t)$ for $i\in\{r,d\}$, a unique successor state: 
        $$\delta^{\textup{\texttt{tb}}}(t,(e_r,e_d))= \begin{cases} 
        ((E'\setminus{e_d},v),r)  & \text{if } e_r=\textup{\texttt{skip}} ;  \\ 
        ((E',\pi_2(e_r)),d) & \text{otherwise}. \end{cases}$$
\end{definition}
\noindent The above definition enforces the `turn-based' form by marking states with their `owners'. Then, the `forced' \texttt{skip} is used by runner in a game-state owned by demon (and \emph{vice versa}). 

The ATL-framework is naturally well-suited to analyze concurrent game structures, which formalize situations were players (attempt to) move simultaneously. We will assume that the conflicting choice of selecting the same edge results in mutual cancellation and has no effect.

\begin{definition}\label{def:concurrentSabotageGameStructure} 
    A \emph{concurrent \textup{(\texttt{con}-)}sabotage game structure} based on a graph $G=(V,E)$ is a tuple: $\mathbb{S}^{\textup{\texttt{con}}}(G) =(\mathbb{AG},Act,S^{\textup{\texttt{con}}},act^{\textup{\texttt{con}}},\delta^{\textup{\texttt{con}}}),$ where:
$S^{\textup{\texttt{con}}}= \mathcal{S}(G)$;
$act^{\textup{\texttt{con}}}:\mathbb{AG}\times S^{\textup{\texttt{con}}} \rightarrow \mathcal{P}(Act)$, s.t. given an $s=(E',v)$: $act(d,s)=E'$, and 
$act(r,s)=\{e\in E'\mid \pi_1(e)=v \}$; 
$\delta^{\textup{\texttt{con}}}$ is a transition function assigning to $s$ and action profile $(e_r,e_d)$, s.t. $ e_i\in act(i,s)$ (for $i\in\{r,d\}$) a unique state: 
        $$\delta^{\textup{\texttt{con}}}((E,v),(e,e'))= \begin{cases} 
        (E,v)  & \text{if } e=e';  \\ 
        (E\setminus \{e'\},\pi_2(e)) & \text{otherwise}. \end{cases}$$

\end{definition}


$\mathbb{S}^\mathtt{tb}$ and $\mathbb{S}^\mathtt{con}$  enforce the agents to enact a very strict protocol of moves, with the \textup{\texttt{skip}}-action used in to account for an agent `making way' for the other to move in a turn-based game. We can also use a \textup{\texttt{skip}}-type of action to allow for the active \emph{choice} of `doing nothing'. This would enable us to simulate any order of play: agents taking turns strictly, acting concurrently, but also many other scenarios, for instance: one agent making a couple of moves while the other does nothing, or both agents moving simultaneously for a while, before entering the stage of indefinite inactivity. We will now define such a general sabotage game structure. 
\begin{definition}\label{def:generalSabotageGameStructure}
A \emph{general \textup{(\texttt{gen}-)}sabotage games structure} based on a graph $G=(V,E)$ is $\mathbb{S}^{\textup{\texttt{gen}}}(G)=(\mathbb{AG},Act^{\textup{\texttt{skip}}},S^{\textup{\texttt{gen}}},act^\textup{\texttt{gen}},\delta^\textup{\texttt{gen}}),$ where $S^{\textup{\texttt{gen}}}=\mathcal{S}(G)$; for any agent $i\in \mathbb{AG}$ and game-state $s\in S^{\textup{\texttt{gen}}}$, $act^\textup{\texttt{gen}}(i,s)=act^{\textup{\texttt{con}}}(i,s)\cup\{\textup{\texttt{skip}}\}$; and the transition function $\delta^\textup{\texttt{gen}}$ is defined as follows:
    $$\delta^\textup{\texttt{gen}}((E,v),(e,e'))= \begin{cases} 
    (E,v)  & \text{if } e=e'; \\
        (E\setminus \{e'\},v) & \text{if } e\neq e' \text{and } e=\textup{\texttt{skip}};  \\ 
        (E,\pi_2(e)) & \text{if }  e\neq e' \text{and } e'=\textup{\texttt{skip}}; \\
        (E\setminus \{e'\},\pi_2(e)) & \text{otherwise}. \end{cases}$$   
\end{definition}
\noindent In $\mathbb{S}^{\mathtt{gen}}$, at each game-state $s$ each agent can choose to do noting or to make a move. That joint choice will result in either the game progressing to a new state $t$, in which the position or the set of edges (or both) are updated, or the game remaining in $s$ (if the agents choose the same edge or both decide to skip).\footnote{Illustrations of the three kinds of sabotage game structures are given in Appendix B.}

All of the above sabotage games structures assume the usual asymmetry between the powers of the players: demons' choices are global (demon can pick any edge in the graph at any round), and runner's actions are local (runner can only pick an edge sourced at the runner's current position). In our setting, this is rendered in the following way: for any ${\textup{\texttt{x}}}\in \{\mathtt{tb},\mathtt{con},\mathtt{gen}\}$, and any $s\in S^{\textup{\texttt{x}}}$, $act^{\textup{\texttt{x}}}(r,s)\subseteq act^{\textup{\texttt{x}}}(d,s)$.

\begin{definition}
Given ${\textup{\texttt{x}}}\in \{\mathtt{tb},\mathtt{con},\mathtt{gen}\}$, a play $\lambda$ in $\mathbb{S}^{\textup{\texttt{x}}}(G)$ is a maximal sequence of game-states $s_0,s_1,\ldots $, such that for all $i\geq 0$, $s_i$ is a $\delta^{\textup{\texttt{x}}}$-successor of $s_{i-1}$ in $\mathbb{S}^{\textup{\texttt{x}}}(G)$; $\lambda[i]$ is the $i$-th state in $\lambda$; $\lambda[i,j]$ is the finite segment $\lambda[i],\ldots,\lambda[j]$, which is also called a `history' and denoted by $h$ (with $last[h]$ being the last element in $h$, and $length[h]$ being the length of $h$);\footnote{Note that the $length$ will be also applied to finite plays.} $\lambda[i,\infty]$ is the suffix $\lambda[i],\lambda[i+1],\ldots$; if $\lambda[0]=s$, $\lambda$ is called $s$-play. The set of all plays of $\mathbb{S}^{\textup{\texttt{x}}}(G)$ will be denoted by $Plays(\mathbb{S}^{\textup{\texttt{x}}}(G))$, and the set of all $s$-plays of $\mathbb{S}^{\textup{\texttt{x}}}(G)$ by $Plays(\mathbb{S}^{\textup{\texttt{x}}}(G),s)$. 
\end{definition}
\begin{proposition}
Let $G=(V,E)$ be a graph. (1.)~$Plays(\mathbb{S}^{\textup{\texttt{tb}}}(G))$ is a finite set of plays of finite length. 
(2.)~For any $\lambda \in Plays(\mathbb{S}^{\textup{\texttt{con}}}(G))$, there is a $\lambda'\in Plays(\mathbb{S}^{\textup{\texttt{gen}}}(G))$, s.t. for all $i< length(\lambda)$, $\lambda[i]= \lambda'[i]$.
(3.)~$Plays(\mathbb{S}^{\textup{\texttt{con}}}(G))$ and $Plays(\mathbb{S}^{\textup{\texttt{gen}}}(G))$ are infinite and contain infinite plays.
\end{proposition}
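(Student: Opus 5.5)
For (1.), I will use a termination measure. Take a turn-based state $t=((E',v),a)$ and consider any transition out of it. If $a=d$, the demon's action set is $E'$ and the runner is forced to \texttt{skip}. So the successor is $((E'\setminus\{e_d\},v),r)$ and $|E'|$ drops by one. If $a=r$, the runner moves along an edge of $E'$, the edge set is unchanged, and ownership passes to $d$. Hence along any play, every two consecutive transitions remove at least one edge.

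From this I will show that every play has length at most $2|E|+2$. Once $E'=\emptyset$, the demon has no available action, so $act^{\texttt{tb}}(d,t)=\emptyset$. I read this as: no action profile exists, the state has no successor, and a maximal sequence ends there. The same happens at a runner-owned state with no outgoing edge from $v$. One point needs care here: $act^{\texttt{tb}}$ is declared into $\mathcal{P}(Act)$, and \texttt{skip} is added explicitly only for the non-owner. I will therefore state the convention that states with empty action sets are terminal, which is what \emph{maximal} in the definition of a play forces. Finiteness of the set of plays then follows. $S^{\texttt{tb}}$ is finite, being $\mathcal{P}(E)\times V\times\{r,d\}$, and each state has finitely many successors. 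So there are only finitely many sequences of bounded length over a finite alphabet.

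For (2.), I will use the fact that $S^{\texttt{con}}=S^{\texttt{gen}}=\mathcal{S}(G)$ and that $act^{\texttt{con}}(i,s)\subseteq act^{\texttt{gen}}(i,s)$. Moreover, on profiles with no \texttt{skip}, $\delta^{\texttt{gen}}$ agrees case by case with $\delta^{\texttt{con}}$. Both return $(E,v)$ if $e=e'$, and both return $(E\setminus\{e'\},\pi_2(e))$ otherwise. So every $\delta^{\texttt{con}}$-successor step is a $\delta^{\texttt{gen}}$-successor step, and $\lambda$ is a partial $\mathbb{S}^{\texttt{gen}}$-play. The remaining issue is maximality. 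If $\lambda$ is finite, its last state may still have \texttt{skip}-successors in $\mathbb{S}^{\texttt{gen}}$ (the profile $(\texttt{skip},\texttt{skip})$ always exists). I therefore extend $\lambda$ by $(\texttt{skip},\texttt{skip})$ loops forever to obtain a maximal $\lambda'$. This is why the statement only demands agreement for $i<length(\lambda)$.

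For (3.), in $\mathbb{S}^{\texttt{gen}}$ the profile $(\texttt{skip},\texttt{skip})$ is always available. It yields the self-loop $(E,v)\mapsto(E,v)$, so the constant play $s,s,s,\ldots$ is an infinite play. In $\mathbb{S}^{\texttt{con}}$ I need a state where both players can pick the same edge. Since $E\neq\emptyset$, pick $e=(x,y)\in E$ and start at $(E,x)$. Then $e\in act(r,(E,x))\cap act(d,(E,x))$, and the profile $(e,e)$ gives the self-loop, hence an infinite play. Once infinite plays exist, the sets of plays are infinite. I will show this by interleaving finitely many different prefixes with the self-loop, or more simply by noting that an infinite play $s,s,\ldots$ together with one genuine move after $k$ loops gives distinct plays for each $k$. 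The exception is when no genuine move exists at all, e.g. in $\mathbb{S}^{\texttt{con}}$ with a single edge. In that case I will fall back on the reading that infinitely many plays is witnessed by infinitely many distinct strategy-induced histories, or else only claim the existence of infinite plays. The main obstacle is this degenerate case for (3.), together with pinning down the terminal-state convention used in (1.).
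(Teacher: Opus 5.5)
Your arguments for (1) and (2) follow the paper's route. For (1), you bound the length of a turn-based play by counting demon deletions and use finiteness of $S^{\texttt{tb}}$. For (2), you use $act^{\texttt{con}}(i,s)\subseteq act^{\texttt{gen}}(i,s)$ and the agreement of $\delta^{\texttt{con}}$ and $\delta^{\texttt{gen}}$ on \texttt{skip}-free profiles. You are in fact more careful than the paper on both. Its bound $2|E|-1$ is too small: a single edge $(v_0,v_1)$ already gives the three-state play $((E,v_0),r),((E,v_1),d),((\emptyset,v_1),r)$. Its proof of (2) asserts that $\lambda$ itself is a $\texttt{gen}$-play. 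That fails for finite $\lambda$, because $(\texttt{skip},\texttt{skip})$ is always available in $\mathbb{S}^{\texttt{gen}}$, so $\lambda$ is not maximal there. Your padding with $(\texttt{skip},\texttt{skip})$ is exactly the needed repair.

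On (3), the case you flag as the main obstacle is a counterexample to the statement, and you should say so. Do not work around it by re-reading ``infinitely many plays''; counting histories changes what $Plays$ means. Suppose $E=\{e\}$ with $e=(x,y)$. At $(E,x)$ the only profile in $\mathbb{S}^{\texttt{con}}$ is $(e,e)$, and at every other state $act^{\texttt{con}}(r,\cdot)=\emptyset$. So $Plays(\mathbb{S}^{\texttt{con}}(G))$ consists of the loop $(E,x),(E,x),\ldots$ and $2|V|-1$ one-state plays, which is a finite set. The paper's own $G^\infty$ of Example~\ref{ex:concurrent_infinite} is an instance.

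The paper's proof misses this. It argues by contradiction from a longest play of length $n\in\mathbb{N}$. But the $(e,e)$-loop it exhibits right afterwards is an infinite play, so the argument never rules out a finite set of plays that contains an infinite one. Its transfer of infiniteness to $\mathbb{S}^{\texttt{gen}}$ ``by (2.)'' inherits the same failure.

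What is true, and what you should prove instead:
\begin{itemize}
\item In $\mathbb{S}^{\texttt{con}}$, an infinite play exists whenever $E\neq\emptyset$; your $(e,e)$-loop shows this.
\item In $\mathbb{S}^{\texttt{con}}$, the set of plays is infinite exactly when $|E|\geq 2$. Then $(e,e')$ with $e'\neq e$ moves $(E,x)$ to $(E\setminus\{e'\},y)$, and maximal extensions of $k$ loops followed by this move are pairwise distinct.
\item In $\mathbb{S}^{\texttt{gen}}$, both claims hold for any nonempty $E$. Argue this directly from the $(\texttt{skip},\texttt{skip})$-loop and the genuine move $(\texttt{skip},e)$, not through (2).
\end{itemize}
Also remove your sentence ``once infinite plays exist, the sets of plays are infinite''. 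The single-edge case is exactly where that implication fails.
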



\paragraph{Structural labeling functions and computations}
Labeling functions assign propositions to game-states in sabotage game structures. Given a graph $G=(V,E)$, we define the set of graph propositions which includes one dedicated proposition for each vertex, and one proposition for each edge in $G$, i.e., $\Phi^{G}=\Phi^V \cup {\Phi^E}$, where $\Phi^V=\{p_i\mid v_i\in V\}$ and $\Phi^E=\{q_i\mid e_i\in E\}$. Intuitively, the elements of $\Phi^{G}$ will account for the position of the runner and the remaining edges at a given game-state. 
\begin{definition}\label{def:struc_label}
A \emph{structural labeling function} $\mathbb{L}: \mathcal{S}(G)\rightarrow \mathcal{P}(\Phi^{G})$ is defined as $\mathbb{L}((E',v))=\{q_{i}\mid e_i\in E'\}\cup \{p_i\mid v=v_i\}$. A \emph{structural computation} of a sabotage game structure $\mathbb{S}(G)$ is a sequence $\mathbb{L}(\lambda[0])$, $\mathbb{L}(\lambda[1]),\ldots$ such that $\lambda\in Plays(\mathbb{S}(G))$. 
\end{definition}

\begin{proposition}
(1.)~For any $\lambda\in Plays(\mathbb{S}^{\textup{\texttt{tb}}}(G),((E,v),r))$ and an odd $i< length(\lambda)$, $\mathbb{L}(\lambda[i])\neq \mathbb{L}(\lambda[i+1])$. 
(2.)~For any $\lambda \in Plays(\mathbb{S}^{\textup{\texttt{tb}}}(G))$, there is a $\lambda'\in Plays(\mathbb{S}^{\textup{\texttt{gen}}}(G))$, such that for all $i< length(\lambda)$, $\mathbb{L}(\lambda[i])= \mathbb{L}(\lambda'[i])$. 

\end{proposition}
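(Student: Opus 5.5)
The plan is to read $\mathbb{L}$ on $S^{\texttt{tb}}$ through the projection onto the first component, i.e.\ $\mathbb{L}(((E',v),a)):=\mathbb{L}((E',v))$. Both parts then reduce to a step-by-step inspection of $\delta^{\texttt{tb}}$ and $\delta^{\texttt{gen}}$. I expect the main obstacle to be bookkeeping rather than mathematics: fixing how $\mathbb{L}$ acts on owner-labelled states, and handling the boundary case where $i+1$ falls outside the play. I will read (1.) as quantifying over odd $i$ with $i+1<length(\lambda)$; otherwise $\lambda[i+1]$ is undefined.

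For (1.) I first prove an alternation invariant by induction on the index. In a play starting at $((E,v),r)$, the state $\lambda[k]$ is owned by $r$ for even $k$ and by $d$ for odd $k$. The base case is given. At a runner-owned state, $act^{\texttt{tb}}(r,t)$ consists of edges only, with no \texttt{skip}, so the second clause of $\delta^{\texttt{tb}}$ applies and the successor is owned by $d$. At a demon-owned state the runner's only action is \texttt{skip}, so the first clause applies and the successor is owned by $r$.

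Now take an odd $i$ with $i+1<length(\lambda)$. Then $\lambda[i]=((E',v),d)$ has a successor, so the demon selects some $e_d=e_j\in E'$. The next state is $\lambda[i+1]=((E'\setminus\{e_j\},v),r)$. Hence $q_j\in\mathbb{L}(\lambda[i])$ but $q_j\notin\mathbb{L}(\lambda[i+1])$, and the two labels differ.

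For (2.) I build $\lambda'$ transition by transition, translating each $\delta^{\texttt{tb}}$-step into a $\delta^{\texttt{gen}}$-step on the underlying $\mathcal{S}(G)$-states. A runner step from $((E',v),r)$ with edge $e_r$ (so $\pi_1(e_r)=v$ and $e_r\in E'$) becomes the profile $(e_r,\texttt{skip})$. This profile is legal in $\mathbb{S}^{\texttt{gen}}$, since $e_r\in act^{\texttt{con}}(r,(E',v))$. Because $e_r\neq\texttt{skip}$, the third clause gives $(E',\pi_2(e_r))$, as required. A demon step removing $e_d\in E'$ becomes the profile $(\texttt{skip},e_d)$, which by the second clause yields $(E'\setminus\{e_d\},v)$.

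By (1.) of the earlier proposition, $\lambda$ is finite. A play in $\mathbb{S}^{\texttt{gen}}$ must be maximal, so I extend the simulated prefix. The profile $(\texttt{skip},\texttt{skip})$ is always available and, by the first clause, leaves the state unchanged. Appending the last state forever therefore gives a genuine (infinite) play $\lambda'$. For all $i<length(\lambda)$, $\lambda'[i]$ is the first component of $\lambda[i]$, so $\mathbb{L}(\lambda[i])=\mathbb{L}(\lambda'[i])$.
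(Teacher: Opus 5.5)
Your proof is correct and follows the paper's route. In (1.) the demon's deletion at an odd, demon-owned index removes some $q_j$ from the label. In (2.) each turn-based profile $(e_r,\texttt{skip})$ or $(\texttt{skip},e_d)$ is replayed verbatim in $\mathbb{S}^{\texttt{gen}}$, with $\mathbb{L}$ read through the first component exactly as the paper does, and your $(\texttt{skip},\texttt{skip})$ padding just makes the paper's ``any $\lambda'$ extending the prefix'' concrete. The boundary case you guard against in (1.) never actually arises: a demon-owned state is entered by a runner move along an edge that remains in $E'$, so the demon always has a move there.
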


While the set of turn-based computations and the set of concurrent computations are in general incomparable, we can easily see that any (different from the predecessor) game-state reachable in the concurrent game structure is reachable in the turn-based game, only it takes two steps instead of one.

\begin{example}\label{ex:sabotageGame}
Consider a graph $G=(V,E)$ such that $V=\{0,1,2\}$ and  $E=\{(0,1),(0,2),(1,2)\}$, and the runner starts in $0$, see Fig.~\ref{fig:sabotageGame}a. Runner and demon now choose an edge each---if this joint action contains the same edge twice (runner moves on the edge the demon attempts to delete), it will have no effect. Fig.~\ref{fig:sabotageGame}b, shows the effect of a harmonious choice: runner moves from $0$ to $1$, while demon deletes $(0,2)$. Next, runner moves to $2$ and demon deletes the edge $(0,1)$, Fig.~\ref{fig:sabotageGame}c. The the runner is now stuck in $v_2$ so he has no executable action and the concurrent play ends. This play is not a play in turn-based structure, but it can be transformed into one by `unraveling' every concurrent action profile $(e_r,e_d)$ into a sequence of action profiles in a turn-based play: $(e_r, \texttt{skip}),(\texttt{skip},e_d)$. 
    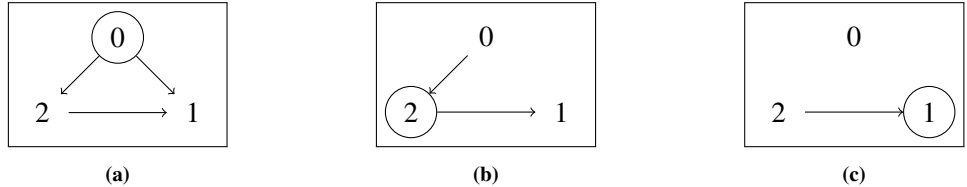
\begin{figure}[H]
        \centering

         \begin{center}
         \begin{subfigure}[b]{0.3\textwidth}
            \begin{center}
\fbox{\begin{tikzpicture}[auto,node distance=1cm]
\node[circle, draw] (0) at (1,1) {$0$};
\node[circle] (2) at (0,0){$2$};
\node[circle] (1) at (2,0) {$1$};
\path    (0) edge [->] node [] {} (2);
\path (2) edge[->] node {} (1);
\path    (0) edge [->] node [] {} (1);       
\end{tikzpicture}}
\caption{}
\end{center}
\end{subfigure}
\begin{subfigure}[b]{0.3\textwidth}
   \begin{center}
\fbox{\begin{tikzpicture}[auto,node distance=1cm]
\node[circle] (0) at (1,1) {$0$};
\node[circle, draw] (2) at (0,0){$2$};
\node[circle] (1) at (2,0) {$1$};
\path    (0) edge [->] node [] {} (2);
\path (2) edge[->] node {} (1);
\end{tikzpicture}}
\caption{}
\end{center}
\end{subfigure}
\begin{subfigure}[b]{0.3\textwidth}
\begin{center}
\fbox{\begin{tikzpicture}[auto,node distance=1cm]
\node[circle] (0) at (1,1) {$0$};
\node[circle] (2) at (0,0){$2$};
\node[circle, draw] (1) at (2,0) {$1$};
\path (2) edge[->] node {} (1);
\end{tikzpicture}}
\caption{}
\end{center}
\end{subfigure}
\end{center}
\vspace{-0.5cm}
        \caption{A play of a concurrent sabotage game structure. The circle marks the runner's position in the graph, the consecutive game-states (b) and (c) are produced by a joint action of runner and demon executed in the previous state.}
    \label{fig:sabotageGame}
    \end{figure}
\end{example}

\paragraph{Sabotage game models}
Labeling functions turn our sabotage game structures into sabotage game models, which will in turn allow interpreting the language of ATL$^\ast$.  
\begin{definition}

\item Let $(G,v_0,b)$ be an LSG, and $\textup{\texttt{x}}\in\{\textup{\texttt{tb}},\textup{\texttt{con}},\textup{\texttt{gen}}\}$. A \emph{liveness $\textup{\texttt{x}}$-sabotage game model} is $\mathbb{M}^{L\textup{\texttt{x}}}(G)=(\mathbb{S}^\textup{\texttt{x}}(G),\mathbb{L})$, where $\mathbb{S}^\textup{\texttt{x}}(G)$ is an $\textup{\texttt{x}}$-sabotage game structure based on $G$ and $\mathbb{L}: S^\textup{\texttt{x}}\rightarrow \mathcal{P}(\Phi^{G})$ is the structural labeling function, as in Def.~\ref{def:struc_label}.~
\item Let $(G,v_0,v_g)$ be an RSG, and $\textup{\texttt{x}}\in\{\textup{\texttt{tb}},\textup{\texttt{con}},\textup{\texttt{gen}}\}$. A \emph{reachability $\textup{\texttt{x}}$-sabotage game model} is $\mathbb{M}^{R\textup{\texttt{x}}}(G,v_g)=(\mathbb{S}^\textup{\texttt{x}}(G),\mathbb{L}^g)$, where $\mathbb{S}^\textup{\texttt{x}}(G)$ is an $\textup{\texttt{x}}$-sabotage game structure and $\mathbb{L}^g: S^\textup{\texttt{x}}\rightarrow \mathcal{P}(\Phi^{G}\cup\{g\})$, such that for any $s\in S^\textup{\texttt{x}}$, $\mathbb{L}^g(s)=\mathbb{L}(s)\cup\{g\}$ if $v_g=v$, and $\mathbb{L}^g(s)=\mathbb{L}(s)$ otherwise.\footnote{When $G$ and $v_g$ is clear from context, we will write $\mathbb{M}^{L\textup{\texttt{x}}}$ instead of $\mathbb{M}^{L\textup{\texttt{x}}}(G)$, and $\mathbb{M}^{R\textup{\texttt{x}}}$ instead of $\mathbb{M}^{R\textup{\texttt{x}}}(G,v_g)$.}

\end{definition}

\paragraph{Alternating-time Temporal Logic, ATL$^\ast$}
Let us recall the setting of ATL$^\ast$ (following the standard approach in \cite{Demri:2016}). We will interpret its language on sabotage game models.
\begin{definition}[Syntax of ATL$^\ast$]
Let $\Phi$ be a set of propositions, $p\in\Phi$, and let $\mathbb{AG}$ be a set of agents with $C\subseteq \mathbb{AG}$. The syntax of ATL$^\ast$ is given by:
$
\varphi::= \top \, | \, p \,| \,\neg \varphi \,| \, \varphi \land \varphi\,
| \, X\varphi\, | \,G \varphi \,| \,\varphi U \varphi\,| \, \langle\!\langle C\rangle\!\rangle \varphi$. 
We also define $F\varphi:=\top U\varphi$, $\llbracket C \rrbracket \varphi := \neg \langle\!\langle C\rangle\!\rangle \neg \varphi$, and $\bot$ as $\neg\top$.    
\end{definition}

We will interpret the language ATL$^\ast$ on sabotage models, using the standard ATL$^\ast$ semantics.  The formulae are interpreted on states and plays of a sabotage game model. Accordingly, a distinction between state and path formulae is made, with state formulae given by: 
$\varphi::= \top \, | \, p \,| \,\neg \varphi \,| \, \varphi \land \varphi\,
| \, \langle\!\langle C\rangle\!\rangle \gamma$, 
and path formulae by:
$ \gamma::= \varphi \,| \,\neg \gamma \,| \,\gamma\land \gamma\,
| \, X\gamma\, | \,G \gamma\,| \gamma U \gamma$.

\begin{definition}
A positional strategy in $\mathbb{S}^\textup{\texttt{x}}$ of $a\in\mathbb{AG}$ is $str_a:S^\textup{\texttt{x}}\rightarrow Act^\texttt{skip} $, s.t. $str_a(s)\in act^{\textup{\texttt{tb}}}(a,s)$. 
We will refer to any $C\subseteq \mathbb{AG}$ as \emph{coalition} $C$. A strategy in $\mathbb{S}^{\textup{\texttt{x}}}$ for a coalition $C$ is a tuple of strategies, one for every $a\in C$.  The set of executable action profiles at $s$ is $act^{\textup{\texttt{x}}}(s)=\Pi_{a\in\mathbb{AG}}act^{\textup{\texttt{x}}}(a,s)$. A $C$-action is a tuple $\alpha_C$ such that $\alpha_C(a)\in act^{\textup{\texttt{x}}}(a,s)$ for every $a\in C$, and $\alpha_C(a')= \#_{a'}$ for every $a'\notin C$. An action profile $\alpha\in  act^{\textup{\texttt{x}}}(s)$ \emph{extends} a $C$-action $\alpha_C$, denoted $\alpha_C \sqsubseteq \alpha$, if $\alpha(a)=\alpha_C(a)$ for every $a \in C$. \emph{The outcome set of the $C$-action}, $\alpha_C$ at $s$ in $\mathbb{S}^{\textup{\texttt{x}}}$ is the set of states $post_{\mathbb{S}^{\textup{\texttt{x}}}} (s, \alpha_C) := \{\delta^{\textup{\texttt{x}}}(s, \alpha) \mid \alpha \in act^{\textup{\texttt{x}}}(s) \text{ and } \alpha_C \sqsubseteq \alpha\}$.
Finally, the set of plays that can be realized by $C$ when following the strategies in $\mathtt{str}_C$ is defined as:
$Plays(\mathbb{S}^{\textup{\texttt{x}}}(G),s,\mathtt{str}_C):= \{\lambda \in Plays(\mathbb{S}^{\textup{\texttt{x}}}(G),s)\mid \lambda[j+1] \in  post_{\mathbb{S}^{\textup{\texttt{x}}}} (\lambda[j],\mathtt{str}_C) \text{ for all } j <length(\lambda)\}$.
\end{definition}
\begin{definition}\label{def:atlSemantics}
Let $\mathbb{M}^{\textup{\texttt{y}\texttt{x}}}$ be a sabotage game model with $\textup{\texttt{y}}\in\{\texttt{R},\texttt{L}\}$ and $\textup{\texttt{x}}\in\{\texttt{tb},\texttt{con},\texttt{gen}\}$, $s$ a game-state in $S^{\textup{\texttt{x}}}$, $\lambda \in Plays(\mathbb{S}^{\textup{\texttt{x}}}(G))$ and let $\Phi$ be a set of propositions with $p\in\Phi$. 
The semantics ATL$^\ast$ is defined inductively for the state and path formulae in the following way:\end{definition}

\begin{center}
\begin{small}
\setlength\tabcolsep{3pt}
\noindent\begin{tabular}[t]{|lcl|}\hline
State formulae: & & \\\hline
$\mathbb{M}^{\textup{\texttt{y}\texttt{x}}},s   \models p $&iff  & $p\in \mathbb{L}^g(s)$, for all $p \in \Phi$ \\
$\mathbb{M}^{\textup{\texttt{y}\texttt{x}}},s   \models \neg \varphi$ & iff & $\mathbb{M}^{\textup{\texttt{y}\texttt{x}}},s \not \models \varphi$\\
$\mathbb{M}^{\textup{\texttt{y}\texttt{x}}},s   \models  \varphi \land \psi$ & iff & $\mathbb{M}^{\textup{\texttt{y}\texttt{x}}},s  \models \varphi$ and $\mathbb{M}^{\textup{\texttt{y}\texttt{x}}},s \models \psi$\\
$\mathbb{M}^{\textup{\texttt{y}\texttt{x}}},s   \models \langle\!\langle C\rangle\!\rangle \varphi$ & iff & there is a $C$-strategy  $\mathtt{str}_C$ \\
& & s.t. $\mathbb{M}^{\textup{\texttt{y}\texttt{x}}},\lambda \models \varphi$ holds for all \\
   & &  $\lambda\in Plays(\mathbb{S}^{\textup{\texttt{x}}}(G),s,\mathtt{str}_C) $\\\hline
\end{tabular}\vspace{0.7cm}
\setlength\tabcolsep{3pt}
\begin{tabular}[t]{|lcl|}\hline
Path formulae: & & \\\hline
   $\mathbb{M}^{\textup{\texttt{y}\texttt{x}}},\lambda   \models \varphi $ & iff &  $\mathbb{M}^{\textup{\texttt{y}\texttt{x}}},\lambda[0]\models \varphi$ for every \\
   && state formula  $\varphi$\\
      $\mathbb{M}^{\textup{\texttt{y}\texttt{x}}},\lambda   \models \neg \psi$   &     iff  & $\mathbb{M}^{\textup{\texttt{y}\texttt{x}}},\lambda \not \models \psi$\\
              $\mathbb{M}^{\textup{\texttt{y}\texttt{x}}},\lambda   \models  \psi_1 \land \psi_2$   &     iff &  $\mathbb{M}^{\textup{\texttt{y}\texttt{x}}},\lambda \models \psi_1$ and $\mathbb{M}^{\textup{\texttt{y}\texttt{x}}},\lambda \models \psi_2$\\
  $\mathbb{M}^{\textup{\texttt{y}\texttt{x}}},\lambda  \models X \psi$   &     iff   & $\mathbb{M}^{\textup{\texttt{y}\texttt{x}}},\lambda[1,\infty)\models \psi$\\
   $\mathbb{M}^{\textup{\texttt{y}\texttt{x}}},\lambda  \models  G \psi$   &     iff  &  $\mathbb{M}^{\textup{\texttt{y}\texttt{x}}},\lambda[i,\infty)\models \psi$   for all $i\geq 0$\\
    $\mathbb{M}^{\textup{\texttt{y}\texttt{x}}},\lambda  \models  \psi_1 U \psi_2$   &     iff  &    there is a position  $i\geq 0$  \\
    & & s.t.   $\mathbb{M}^{\textup{\texttt{y}\texttt{x}}},\lambda[i,\infty)\models \psi_2$   and   \\
    & & $\mathbb{M}^{\textup{\texttt{y}\texttt{x}}},\lambda[j,\infty)\models \psi_1$   for all  $0 \leq j<i $ \\\hline
\end{tabular}
\end{small}
\end{center}

\paragraph{Existence of winning strategies in Reachability Sabotage Games}
When considering winning conditions in reachability sabotage games, the plays of interest are those in which the goal vertex is eventually reached, i.e., at some point in the play arrives at a game-state in which $g$ true. In the initial game-states of such plays $\top U g$ holds (or equivalently, using the future operator, $Fg$). The existence of a winning strategy for the runner in an RSG can be then expressed by: $\langle\!\langle \{r\} \rangle\!\rangle F g$, which says that runner can enforce that (has a strategy such that on all plays in that strategy) $g$ will eventually hold. Let us first consider \emph{turn-based} reachability sabotage games.  

\begin{proposition}\label{prop:tb-winning-RSG}
Let $G=(V,E)$ be a graph. Runner has a winning strategy in a turn-based RSG $(G,v_0,v_g)$ \ iff \ $\mathbb{M}^{R\textup{\texttt{tb}}},((E,v_0),r)\models \langle\!\langle \{r\} \rangle\!\rangle F g$.
\end{proposition}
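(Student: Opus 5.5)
The plan is to set up an explicit correspondence between turn-based sabotage matches from Definition~\ref{def:match} and plays of $\mathbb{S}^{\texttt{tb}}(G)$ starting at $((E,v_0),r)$. Then I will transfer strategies in each direction.

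First I fix the correspondence. Map a game-state $s^k=(E^k,v^k)$ of a strictly turn-based $v_0$-play to the structure state $(s^k,r)$ if $k$ is even and $(s^k,d)$ if $k$ is odd. By Definition~\ref{def:TurnSabotageGameStructure}, at an $r$-labelled state the demon can only \texttt{skip}, and the runner moves along an outgoing edge in $E'$, passing to a $d$-labelled state. At a $d$-labelled state the runner can only \texttt{skip}, and the demon deletes an edge in $E'$, returning to an $r$-labelled state. These are exactly the odd and even clauses of Definition~\ref{def:play}. Maximal plays end precisely when the current player has no action. For the demon this happens only when $E'=\emptyset$. For the runner it is the dead-end condition of Definition~\ref{def:match}.

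A runner strategy in the RSG is a function from histories ending at the runner's turn to available edges. A positional strategy $str_r$ of the structure is such a function that depends only on the current state. I will use two facts about the structure. By Proposition (1.) on plays of $\mathbb{S}^{\texttt{tb}}$, all plays are finite. Moreover, every transition strictly decreases the pair $(|E'|,\text{owner})$ in a suitable order, since each round deletes an edge. So the graph of reachable states is acyclic, and each state fully determines the remaining game.

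For the direction ($\Leftarrow$), take $str_r$ witnessing $\langle\!\langle\{r\}\rangle\!\rangle Fg$. Define the RSG strategy by playing $str_r$ on the current (translated) state. Every match consistent with it translates to a prefix of some play in $Plays(\mathbb{S}^{\texttt{tb}}(G),((E,v_0),r),str_r)$. On that play some state satisfies $g$, i.e.\ has $v=v_g$. Since runner positions coincide along the correspondence, the match reaches $v_g$ before any dead-end, so the runner wins. One subtlety is that an RSG match stops at the first visit to $v_g$ while the structure play continues. Eventuality is preserved under this truncation, so this causes no problem.

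For the direction ($\Rightarrow$), which I expect to be the main obstacle, the difficulty is that an arbitrary RSG winning strategy is history-dependent, while the semantics quantify over positional strategies. I would resolve this by backward induction on the acyclic, finite state space. Call a state winning if the runner can force reaching $v_g$ from it. Define this recursively: an $r$-state is winning if $v=v_g$ or some successor is winning, and a $d$-state is winning if $v=v_g$ or all successors are winning. This mirrors $\rho_n$ of Proposition~\ref{prop:reachabilityWin}. Because the state records the whole remaining game (edge set and position), whether the runner can win depends only on the state, so the existence of some winning strategy implies that $((E,v_0),r)$ is winning. Now let $str_r$ pick a winning successor at winning $r$-states, and pick arbitrarily elsewhere. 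By induction on the decreasing measure, every play from a winning state consistent with $str_r$ stays in winning states until a state with $v=v_g$ occurs. A non-goal winning state always has a successor, so plays cannot end there, and finiteness forces reaching $g$. Hence $Fg$ holds on all such plays.
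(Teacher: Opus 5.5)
Your proof is correct, but it takes a different route from the paper's on the substantive step. The paper's proof is a chain of equivalences that unfolds the semantics of $\langle\!\langle\{r\}\rangle\!\rangle Fg$: a positional $str_r$, the plays consistent with it, the turn-based alternation of $(str_r(\cdot),\texttt{skip})$ and $(\texttt{skip},e)$ profiles, and an index with $v=v_g$. In its last step it simply reads $str_r$ off as a positional function $win:\mathcal{S}(G)\to E$ and declares this to be a winning strategy in the RSG. It never addresses history-dependent strategies, nor the fact that matches stop at the first visit to $v_g$ or at a dead-end while structure plays continue.

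Your $(\Leftarrow)$ direction is essentially that unfolding made explicit, with the truncation handled. Your $(\Rightarrow)$ direction adds something the paper lacks: a Zermelo-style backward induction over the acyclic state space, well-founded via $(|E'|,\text{owner})$. It shows that any possibly history-dependent runner win yields a positional one. This buys a proof that holds under the natural reading of ``winning strategy'' in the RSG. Your recursive notion of winning state also ties the ATL characterization back to the $\rho_n$ of Proposition~\ref{prop:reachabilityWin}. The paper's version buys brevity, at the cost of building positionality into the definition.

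Two small points need tidying. First, as you wrote the $d$-clause, a $d$-state with $E'=\emptyset$ and $v\neq v_g$ is vacuously winning, so ``a non-goal winning state always has a successor'' is false as stated. Either require at least one successor in that clause, or note that every $d$-state reached by a runner move still contains the edge just traversed, so $E'\neq\emptyset$ on all relevant plays. Second, $str_r$ should pick a winning successor only at \emph{non-goal} winning $r$-states. A goal $r$-state need not have one, which is harmless since $g$ already holds there.
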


The classical winning conditions of the RSG assume specific incentives of the two players. Motivated by modeling the interaction between teacher (demon) and learner (runner) in learning scenarios, \cite{Gierasimczuk:2009} introduced variations of the classical winning conditions of the sabotage game: runner could be eager or unwilling to get to the goal and demon could be helpful or unhelpful in the process. Accordingly, we get four different turn-based sabotage games: the classical $RSG^{EU}$ (eager runner and unhelpful demon), $RSG^{EH}$ (eager runner and helpful demon), $RSG^{UH}$ (unwilling runner and helpful demon), and $RSG^{UU}$ (unhelpful demon and unwilling runner). Tab.~\ref{tab:comparison_SML_ATL} shows the ATL formulae expressing winning conditions in these games (in the sense of Prop.~\ref{prop:tb-winning-RSG}), and compare them to way they can be expressed in SML. 
\begin{table}[H]
\begin{center}
\setlength\tabcolsep{7pt}
\renewcommand{\arraystretch}{1.1}
\noindent\begin{tabular}[t]{|l|l|c|l|}\hline
Type of RSG & Implied winner & SML formula ($\rho_k$, $k=|E|$) & ATL formula \\\hline
$RSG^{EU}$ & runner& $\rho_0:=g, \ \rho_{n+1}:= g\vee \lozenge\blacksquare\rho_n$ \cite{Benthem:2005}& $\langle\!\langle \{r\} \rangle\!\rangle F g$ \\
$RSG^{EH}$& runner and demon & $\rho_0:=g, \ \rho_{n+1}:= g\vee  \lozenge\blacklozenge\rho_n$ \cite{Gierasimczuk:2009} & $\langle\!\langle \{r,d\} \rangle\!\rangle  F g$\\
$RSG^{UH}$ & demon & $\rho_0:=g, \ \rho_{n+1}:= g\vee (\lozenge \top \wedge \square\blacklozenge\rho_n)$ \cite{Gierasimczuk:2009} &  $\langle\!\langle \{d\} \rangle\!\rangle  F g$\\
$RSG^{UU}$ & runner and demon & $\rho_0:=\neg g, \ \rho_{n+1}:= \neg g\vee  \lozenge\blacklozenge\rho_n$ & $\langle\!\langle \{r,d\} \rangle\!\rangle  \neg Fg$\\\hline
\end{tabular}
\end{center}

\caption{Comparison of SML and ATL formulae expressing winning conditions in sabotage games with various incentives}\label{tab:comparison_SML_ATL}
\end{table}
\noindent Additionally, note that the winning condition for demon in the standard $RSG^{EU}$ can be expressed by $\langle\!\langle \{d\} \rangle\!\rangle  G \neg g$ (demon has a strategy such that in all resulting plays the runner does not eventually reach the goal). Note that in turn-based sabotage games this is equivalent to $\llbracket \{r\} \rrbracket  G \neg g$, which expresses that runner cannot avoid losing. This duality property is an instance of the fact that turn-based zero-sum games are \emph{determined} (see~\cite{Buchi:1969aa,Gurevich:1982aa,Alur:2002}). In general, any ATL formulae $ \langle\!\langle C \rangle\!\rangle \varphi $ in Tab.~\ref{tab:comparison_SML_ATL} is equivalent to $ \llbracket \mathbb{AG}\setminus C \rrbracket\neg \varphi $, e.g., the ATL formula for $RSG^{EH}$ is equivalent to $\llbracket \emptyset \rrbracket  F g$, 
stating the CTL-expressible condition that there is a play on which runner reaches the goal. 

As a matter of fact we can show that runner can (almost) never win a turn-based RSG.
\begin{proposition}\label{prop:tb-demonwinning}
Let $((V,E),v_0,v_g)$ be a RSG with $v_0\neq v_g$ and $(v_0,v_g)\notin E$. Then we have that $\mathbb{M}^{R\textup{\texttt{tb}}},((E,v_0),r)\not\models \langle\!\langle \{r\} \rangle\!\rangle F g$.
\end{proposition}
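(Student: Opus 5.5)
The plan is to exhibit an explicit positional strategy for the demon under which no play starting at $((E,v_0),r)$ ever visits a state labelled $g$. Since $\langle\!\langle \{r\} \rangle\!\rangle F g$ requires a runner strategy all of whose compatible plays satisfy $F g$, it is enough to show one thing: for every runner strategy $\mathtt{str}_r$, some play in $Plays(\mathbb{S}^{\texttt{tb}}(G),((E,v_0),r),\mathtt{str}_r)$ satisfies $G\neg g$. The play obtained by letting the demon follow the fixed strategy below is such a play. It is indeed a member of that set, because the outcome set of the $\{r\}$-action contains every demon choice. Plays are finite by the earlier proposition on $\mathbb{S}^{\texttt{tb}}$, so $F g$ simply means that some state of the play has position $v_g$.

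The demon strategy $str_d$ is defined on demon-owned states $t=((E',w),d)$. If $(w,v_g)\in E'$, it deletes $(w,v_g)$. Otherwise it deletes an arbitrary edge of $E'$; if $E'=\emptyset$, the play ends there. On runner-owned states it plays the forced \texttt{skip}. This strategy is clearly positional and respects $act^{\texttt{tb}}(d,t)$.

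The core of the argument is an invariant proved by induction along any play $\lambda$ consistent with $str_d$:
\begin{itemize}
\item every runner-owned state $((E',v),r)$ in $\lambda$ satisfies $v\neq v_g$ and $(v,v_g)\notin E'$;
\item every demon-owned state $((E',w),d)$ in $\lambda$ satisfies $w\neq v_g$.
\end{itemize}
The base case is exactly the hypothesis of Prop.~\ref{prop:tb-demonwinning}: $v_0\neq v_g$ and $(v_0,v_g)\notin E$.

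For the step from a runner state $((E',v),r)$, the runner picks an edge $(v,w)\in E'$. Since $(v,v_g)\notin E'$, we get $w\neq v_g$, so the next state $((E',w),d)$ satisfies the invariant. For the step from a demon state $((E',w),d)$ with $w\neq v_g$, the demon's move produces $((E'\setminus\{e_d\},w),r)$. In this state $(w,v_g)$ is absent: either it was not in $E'$, or $str_d$ has just deleted it. The position $w$ is unchanged, so the invariant holds. Hence no state of $\lambda$ has position $v_g$, so $g$ is false throughout and $\lambda\models G\neg g$, which gives the claim.

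The main obstacle is not the combinatorics, which is just this two-step invariant, but matching the argument to the formal definitions. First, the play set must be read as plays consistent with $\mathtt{str}_r$ while the demon ranges freely, which is what $post$ with $\alpha_C\sqsubseteq\alpha$ gives. Second, maximal plays may terminate early, either because the runner is stuck or because $E'=\emptyset$ at a demon state, and I will check that such termination does not break the invariant. Third, I will note that self-loops at $v_g$, or edges leaving $v_g$, are irrelevant, since the runner can only enter $v_g$ through an edge $(v,v_g)$ from its current position.
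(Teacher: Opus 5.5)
Your proposal is correct and follows the paper's argument. It uses the same demon strategy: delete $(w,v_g)$ whenever it is present, and otherwise delete any remaining edge. Your two-step invariant, together with the check that the resulting play is consistent with $\mathtt{str}_r$, makes explicit what the paper only sketches when it says that a simple graph has at most one edge between two vertices.
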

\noindent In other words, unless runner begins in the goal vertex, or the goal vertex is reachable from the start vertex in one step, demon will always be able to prevent runner from reaching the goal. The argument goes as follows. Assume that up to and including the point $i$ in the play the goal has not been reached, and that $\lambda[i]=((E',v),d)$ (it is demon's turn to move). Demon's winning strategy is: if there is an $e\in E'$ s.t. $e=(v,v_g)$, then play $e$; otherwise play any $e'\in E'$. The strategy works because in a simple graph there can only be one edge between any two vertices. This handicap of the runner is why sabotage games are often studied in the context of \emph{multi-graphs}, which allow multiple edges between any two vertices. Our temporal setting can be easily extended to that setup, by extending the set of actions with \emph{indexed} pairs of vertices, see e.g., \cite{Gierasimczuk:2009}. In this paper we decided to keep to simple (and directed) graphs for ease of exposition.

\smallskip

Switching to \emph{concurrent games} does not change the winning conditions for different types of RSG listed in Tab.~\ref{tab:comparison_SML_ATL}. It does however, further limit the runner's power. In concurrent games any action taken by runner can be in principle `canceled' by demon choosing the same edge, preventing it from ever leaving the initial game-state. 
Moreover, note that even though both turn-based and concurrent sabotage games afford cooperative strategies, they have different scopes in the two types. First, the following proposition expresses that if there is a joint winning strategy in the concurrent version of the game, then there is one in the turn-based version of the same game.
\begin{proposition}
Let $((V,E),v_0,v_g)$ be a RSG. If  $\mathbb{M}^{R\textup{\texttt{con}}},(E,v_0)\models \langle\!\langle \{r,d\} \rangle\!\rangle  F g$, then $\mathbb{M}^{R\textup{\texttt{tb}}},((E,v_0),r)\models \langle\!\langle \{r,d\} \rangle\!\rangle  F g$.
\end{proposition}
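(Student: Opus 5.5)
The plan is to extract a single finite witness play from the concurrent joint strategy and then simulate it in the turn-based structure by unravelling each concurrent step, as in Example~\ref{ex:sabotageGame}. Since $\{r,d\}=\mathbb{AG}$, a joint strategy fixes both agents' actions at every state, so from $(E,v_0)$ there is exactly one resulting play $\lambda$ in $\mathbb{S}^{\texttt{con}}$. The assumption therefore yields a play $\lambda=s_0,s_1,\ldots$ with $s_0=(E,v_0)$, every $s_{j+1}$ a $\delta^{\texttt{con}}$-successor of $s_j$, and some $n$ such that $\mathbb{M}^{R\texttt{con}},s_n\models g$, i.e. the runner's vertex in $s_n$ is $v_g$. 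Take $n$ minimal and keep only the finite prefix $s_0,\ldots,s_n$.

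The core step is to show, by induction on $j\le n$, that the turn-based model has a history from $((E,v_0),r)$ to some state $((E_j',v^j),r)$, where $s_j=(E_j,v^j)$ and $E_j\subseteq E_j'$. So the turn-based history reaches the same runner position, with at least as many edges remaining. Consider the concurrent step $s_j\to s_{j+1}$ produced by the profile $(e,e')$, and split into two cases. If $e=e'$, then $s_{j+1}=s_j$ and nothing needs to be done. If $e\neq e'$, then $e=(v^j,u)\in E_j\subseteq E_j'$, so in the turn-based structure the runner can play $e$ and move to $((E_j',u),d)$. Next, if $e'\in E_j'$, the demon deletes $e'$; if $e'$ is no longer in $E_j'$ (which the inclusion does not rule out), the demon deletes any remaining edge. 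Either way we reach $((E_{j+1}',u),r)$ with $E_{j+1}=E_j\setminus\{e'\}\subseteq E_{j+1}'$. One subtlety remains: if $E_j'$ becomes empty, the demon has no available action and the turn-based play ends early. I would avoid this by noting that in the nontrivial step the runner just used $e$, and $e$ is still in $E_j'$ when the demon moves, so the demon can delete $e$ itself. Deleting $e$ never removes an edge that is still needed, because in $\lambda$ the edge $e$ survives into $E_{j+1}$ only if $e\neq e'$; keeping the invariant may therefore require choosing the deletion more carefully. The cleanest fix is to make the demon delete exactly $e'$, which is possible because $e'\in E_j$ and we can maintain the invariant as the equality $E_j'=E_j$. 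This removes the subtlety entirely, and the resulting turn-based state is exactly $((E_{j+1},v^{j+1}),r)$.

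Given this simulation, the turn-based history reaches a state whose runner vertex is $v_g$, so $g$ holds there. I then define the joint turn-based strategy to follow this history (positional strategies suffice, because a history that revisits a state can be shortened to a simple one before being turned into a strategy). On the first visit to a $g$-state the play satisfies $Fg$. The unique outcome play contains that history as a prefix, so $\mathbb{M}^{R\texttt{tb}},((E,v_0),r)\models\langle\!\langle\{r,d\}\rangle\!\rangle Fg$. The main obstacle I expect is the bookkeeping in the turn-based model: the states are tagged with $r$ or $d$, runner positions must be matched at the intermediate demon-owned states, and a positional strategy must be extracted from a history that may revisit states. Cutting loops out of the history handles that last point.
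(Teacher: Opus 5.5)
Your proof is correct, but it takes a different route from the paper's. The paper lifts the concurrent strategies uniformly. It sets $str'_r(((E',v),r)):=str_r((E',v))$ and $str'_d(((E',v),d)):=str_d((E',v))$, with \texttt{skip} at the other agent's states, and then asserts that the induced turn-based play reaches $g$. You instead extract the unique concurrent outcome play, unravel each profile $(e,e')$ into $(e,\texttt{skip}),(\texttt{skip},e')$ as in Example~\ref{ex:sabotageGame}, and read a positional strategy off the resulting history.

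The uniform lifting is shorter, but as written it does not track the concurrent play. At the intermediate state $((E_j,v^{j+1}),d)$ it consults $str_d$ at the runner's \emph{new} position. The concurrent demon, however, deleted $e'_j=str_d((E_j,v^j))$ at the \emph{old} one. Moreover, $str_d$ is unconstrained at $(E_j,v^{j+1})$, which (unless $e$ is a loop) is not on the concurrent play. For example, take $E=\{(v_0,u),(u,v_g),(v_g,v_0)\}$. The concurrent pair with profile $(e_r,e_d)=((v_0,u),(v_g,v_0))$ at the start and $((u,v_g),(v_0,u))$ next wins. Yet if $str_d((E,u))=(u,v_g)$, the lifted strategies strand the runner at $u$. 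Your step-by-step simulation, in which the demon deletes exactly $e'_j$, is what makes the argument go through. The paper's argument would need the same repair: redefine $str_d$ at these pairwise distinct off-play states.

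Two simplifications for your write-up. First, drop the inclusion invariant $E_j\subseteq E'_j$ and the detours about the demon deleting ``any remaining edge'' or $e$ itself. Deleting $e$ would in fact break the inclusion, since $e\in E_{j+1}$ whenever $e\neq e'$. The equality invariant $E'_j=E_j$ is all you need, and it also removes the worry about the demon running out of edges, since $e'_j\in E_j$. Second, the loop-cutting remark is unnecessary. Every demon move strictly shrinks the edge set and owners alternate, so no turn-based history revisits a state, and the history can be read off as a positional strategy directly. Likewise, with positional strategies a cancelling step $e=e'$ cannot occur before $g$ is reached, because the play would then stay in a non-$g$ state forever. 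Handling that case does no harm, but it is not needed.
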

\noindent The opposite implication however does not hold. To see this, consider the following example.

\begin{example}\label{ex:concurrent_infinite}
In the graph $G^{\infty}=(V,E)$ with $V=\{v_0,v_g\}$ and $E=\{(v_0,v_g)\}$ there is a winning strategy for the coalition of runner and demon in a turn-based sabotage game, but there is no such strategy in the concurrent game played on $G^\infty$. In the concurrent play, the only joint choice for the players is to select the same edge $(v_0,v_g)$ at the start of the game, and that will keep them from ever reaching $v_g$.
\end{example}

Example \ref{ex:concurrent_infinite} also demonstrates that in general neither demon nor runner, and not even the coalition of the two players, is guaranteed to be able to force a concurrent game to end. In other words, in RSG $(G^\infty,v_0,v_g)$ we have that $\mathbb{M}^{R\textup{\texttt{con}}},(E,v_0)\not\models \langle\!\langle \{r,d\} \rangle\!\rangle  F X\bot.$


\paragraph{Existence of winning strategies in Liveness Sabotage Games}
The goal in liveness sabotage isn't to reach the goal state. Instead, we want runner to stay `alive' for (at least) a given number of moves. While the existence of a possible move that a runner can make can be expressed by the formula $X\top$ being true in a game-state controlled by runner, we want it to be at a \emph{specific time-point}, after each player moved $b$-times. In order to be able to do this we extend the language of ATL$^\ast$ with the parametrized until-operator borrowed from metric temporal logic \cite{Koymans:1990aa}, allowing a new path-formula, $\psi U_i \varphi$ for $i\in\mathbb{N}$ with the following meaning: $\text{$\mathbb{M},\lambda  \models  \psi U_i \varphi$   iff   $\mathbb{M},\lambda[i,\infty)\models \varphi$ and  for all $j$ s.t. $0 \leq j<i $, $\mathbb{M},\lambda[j,\infty)\models \psi$}.$ 
It allows expressing the condition of runner having `somewhere to go' for $i$ or its rounds, and so that the coalition of players $C$ has a strategy to make the game last at least $i$-rounds. In the case of turn-based sabotage games the following formula will suffice:
$\langle\!\langle \{r\} \rangle\!\rangle \top  U_{2b}  (X\top)$, since at
    even rounds of the turn-based RSG it is always runner's turn. 
    \begin{proposition}
      Let $G=(V,E)$ be a graph. Runner has a winning strategy in a turn-based LSG $(G,v_0,b)$ \ iff \ $\mathbb{M}^{R\textup{\texttt{tb}}},((E,v_0),r)\models \langle\!\langle \{r\} \rangle\!\rangle \top  U_{2b}  (X\top)$.  
    \end{proposition}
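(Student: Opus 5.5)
The plan is to build an explicit correspondence between matches of the turn-based LSG $(G,v_0,b)$ and plays of the turn-based sabotage game structure from $((E,v_0),r)$. Then I translate strategies in both directions. Note that the statement writes $\mathbb{M}^{R\texttt{tb}}$, but the formula uses no $g$, so only the structural part $\mathbb{S}^{\texttt{tb}}(G)$ matters; I will work with $\mathbb{M}^{L\texttt{tb}}$, whose labels are irrelevant here.

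\textbf{Step 1 (plays correspond).} Map a turn-based sabotage $v_0$-play $s^0,\ldots,s^n$ (Def.~\ref{def:play}) to the sequence $(s^0,r),(s^1,d),(s^2,r),\ldots$, alternating owners. The transition $\delta^{\texttt{tb}}$ of Def.~\ref{def:TurnSabotageGameStructure} does exactly this:
\begin{itemize}
\item from an $r$-state, the runner picks an edge out of $v$ and the state passes to a $d$-state with the position updated;
\item from a $d$-state, the demon deletes an edge and the state passes to an $r$-state.
\end{itemize}
This mapping is a bijection between plays of the two formalisms. States at even positions are runner-owned.

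One index convention must be settled. Reading Def.~\ref{def:match}, ``$n\ge b$'' counts game-states of the turn-based play, which mixes runner and demon moves. The formula instead counts $2b$ steps, i.e.\ $b$ runner moves and $b$ demon moves, and then asks for one more runner move. I will read the LSG winning condition as ``the runner is not blocked at any of her first $b$ turns''. This is the reading matching $\gamma_b$ in Prop.~\ref{prop:livenessWin}, whose $b$ diamonds correspond to $b$ runner moves. I will state this convention explicitly and, if needed, appeal to Prop.~\ref{prop:livenessWin} as the bridge.

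\textbf{Step 2 (semantics of the formula).} On a play $\lambda$, the formula $\top U_{2b}(X\top)$ holds iff $\lambda[2b,\infty)\models X\top$, i.e.\ iff $\lambda$ has length greater than $2b+1$: the state $\lambda[2b]$ exists and has a successor. Since $\lambda[2b]$ is runner-owned, having a successor means exactly that $act^{\texttt{tb}}(r,\lambda[2b])\neq\emptyset$. Here I must take care with finite plays, since $X$ on a play that ends is false; I adopt this reading.

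\textbf{Step 3 (strategies).} In the direction $(\Rightarrow)$, a runner winning strategy in the LSG depends on histories. But $\mathbb{S}^{\texttt{tb}}$ is a finite acyclic game: every $d$-move deletes an edge, so states never repeat. The objective is also determined by depth, and depth can be recovered from a state as $2(|E|-|E'|)$ plus $0$ or $1$ according to the owner. Hence each history is determined by its last state up to irrelevant information, and a history-dependent winning strategy can be converted into a positional one as required by the definition of $str_a$. I then check that every play in $Plays(\ldots,\mathtt{str}_r)$ reaches index $2b$ with an available runner move. In the direction $(\Leftarrow)$, a positional strategy satisfying the formula is directly a runner strategy in the LSG, and every demon response yields a play surviving $b$ runner turns.

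The main obstacle is Step 3's positionality, together with the index and length bookkeeping of Steps 1–2. For positionality, I rely on the fact that the deleted-edge count encodes the round number, so positional strategies suffice.
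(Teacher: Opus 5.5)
Your route (identify LSG matches with plays of $\mathbb{S}^{\mathtt{tb}}(G)$ from $((E,v_0),r)$, unfold $U_{2b}$ and $X$, translate strategies) is the same semantic unfolding the paper intends; its proof just says ``analogous to Prop.~\ref{prop:tb-winning-RSG}''. Your positionality argument is also fine: depth equals $2(|E|-|E'|)$ plus the owner bit, so backward induction on this finite acyclic structure gives a positional strategy.

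The gap is the convention you fix in Step~1. In Step~2 you correctly compute what $\top U_{2b}(X\top)$ requires under the strong reading of $X$ you adopt. The runner-owned state $\lambda[2b]$ must have a successor: $b$ runner moves, $b$ demon moves, and then a $(b{+}1)$-st runner move. But you read the LSG condition as ``not blocked at any of her first $b$ turns'', i.e.\ as $\gamma_b$. Under these two choices the biconditional is false. Take $V=\{v_0,v_1\}$, $E=\{(v_0,v_1)\}$, $b=1$. The runner can move to $v_1$, so she wins on your reading; $\gamma_1=\lozenge\top$ also holds at $v_0$, and she wins under the literal ``$n\geq b$'' of Def.~\ref{def:match} too. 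But at $((E,v_1),d)$ the demon's only action deletes $(v_0,v_1)$. The play then ends at $((\emptyset,v_1),r)$, and $X\top$ fails at position $2$. So the check you promise for $(\Rightarrow)$ in Step~3 cannot succeed. Prop.~\ref{prop:livenessWin} cannot bridge the gap either: it characterises $b$ runner moves, while the formula encodes $b+1$.

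The missing idea is to adopt the convention the formula actually expresses, which is the one the paper describes just before the proposition. The runner wins iff, after each player has moved $b$ times, she can still move. Equivalently, she is unblocked at each of her first $b+1$ turns, i.e.\ $\gamma_{b+1}$ in the sense of Prop.~\ref{prop:livenessWin}. With that reading your Steps~1--3 go through. At every odd position the demon has an action, because the edge the runner just traversed is still present. Hence ``$\lambda[2b]$ exists and has a successor'' is exactly ``the runner moved at positions $0,2,\ldots,2b$''. Keeping your $\gamma_b$ reading would instead require the formula $\top U_{2b-2}(X\top)$, or a weak-next reading of $X$; neither is the stated proposition.

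One further technicality arises when you pass to positional strategies. The paper's requirement $str_r(s)\in act^{\mathtt{tb}}(r,s)$ cannot be met at runner-owned dead ends such as $((\emptyset,v),r)$. The strategy must therefore be allowed to be undefined there; otherwise there are no $r$-strategies at all, and every formula $\langle\!\langle \{r\}\rangle\!\rangle\varphi$ is false.
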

\noindent Note that the above formula also implies that $\langle\!\langle \{r\} \rangle\!\rangle (X\top) U_{2b}  (X\top)$, i.e., that runner can enforce the play $\lambda$, s.t. for all $i\leq 2b$, $\lambda[i]$ has a $\delta^\textup{\texttt{tb}}$-successor.


In concurrent LSGs $((V,E),v_0,b)$, as long as $E\neq\emptyset$, runner and demon can jointly enforce that the game is live forever, i.e., $\mathbb{M}^{L\textup{\texttt{con}}},(E,v_0)\models\langle\!\langle \{r,d\} \rangle\!\rangle GX\top$, by indefinitely applying $(e_r,e_d)$, s.t. $e_r=e_d$. A somewhat surprising observation is that what constitutes a problem for runner in an RSG, is an advantage in LSG. Namely, as long as demon chooses the same edge as the runner, runner remains alive, i.e., she retains a position with a successor. In RSG, the canceling choice prevents the runner from progressing towards the goal. 





\paragraph{Sabotage games and the minimum cut of a graph}

A strategy to destroy the edge-connectivity of a graph could be to find and sever the \emph{minimum cut} of the graph. The problem of finding minimum cuts in graphs is well-studied in the field of graph theory (see, e.g., \cite{ford-fulkerson-1956,dinitz-karzanov-lomonosov-1976,karger-1993}). Finding such cuts is often relevant with respect to two specific vertices of the graph, in such a case the problem is called a minimal \emph{$s{-}t$ cut}.

\begin{definition}[Minimum cut \cite{Dantzig:1956}, see also \cite{kleinbergTardos2013}] The \emph{minimum cut} of a graph $G=(V,E)$ is the minimum number of edges in $E$ that, when removed from the graph, partition the vertices into two disjoint sets, $V'$ and $V''$, such that no vertex from $V'$ is reachable from $V''$. The \emph{minimum-cut of a  graph $G=(V,E)$ with respect to the vertices $s,t\in V$} is the minimum number of edges in $E$ that after their removal from $G$ $s$ is not reachable from $t$.\end{definition}

The minimum $s-t$ cut has been characterized in terms of paths in the graph in the following way.

\begin{theorem}[Menger's theorem \cite{Menger:1927aa}]
Let $G=(V,E)$ and let $x, y\in V$ with $x\neq y$. The size of the minimum cut for $x$ and $y$ (the minimum number of edges whose removal disconnects $x$ and $y$) is equal to the maximum number of pairwise edge-independent paths from $x$ to $y$.
\end{theorem}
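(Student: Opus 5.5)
We prove the edge version of Menger's theorem for directed graphs via the max-flow min-cut theorem with unit capacities, or alternatively by a direct augmenting-path argument. Let $\kappa(x,y)$ denote the minimum cut size and $\pi(x,y)$ the maximum number of pairwise edge-disjoint $x$--$y$ paths. The proof has two inequalities.

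\emph{Easy direction.} We show $\pi(x,y)\leq\kappa(x,y)$. Take any cut $C\subseteq E$ whose removal makes $y$ unreachable from $x$, and any family of $\pi(x,y)$ edge-disjoint $x$--$y$ paths. Each path must contain an edge of $C$, since otherwise it would survive the removal. Because the paths are edge-disjoint, distinct paths use distinct edges of $C$. Hence $|C|\geq\pi(x,y)$.

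\emph{Hard direction.} We show $\kappa(x,y)\leq\pi(x,y)$. Give every edge capacity $1$ and run the Ford--Fulkerson augmenting-path procedure from $x$ to $y$. Since capacities are integral, the flow stays $\{0,1\}$-valued and the procedure terminates after at most $|E|$ augmentations.

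1. Decompose the final integral flow of value $f$ into $f$ edge-disjoint $x$--$y$ paths. To do this, repeatedly follow saturated edges from $x$. Flow conservation guarantees we can always continue until we reach $y$. If a cycle appears along the way, delete it. Then remove the path's edges and repeat. This gives $\pi(x,y)\geq f$.

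2. At termination, let $V'$ be the set of vertices reachable from $x$ in the residual graph. Then $y\notin V'$. Every original edge from $V'$ to $V\setminus V'$ is saturated, and every edge from $V\setminus V'$ into $V'$ carries zero flow. So the net flow across the cut equals the number $c$ of edges leaving $V'$, i.e. $f=c$.

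3. Removing these $c$ edges disconnects $y$ from $x$, because any $x$--$y$ path must leave $V'$ through one of them. Hence $\kappa(x,y)\leq c=f\leq\pi(x,y)$.

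The main obstacle is the flow-decomposition and cut-value bookkeeping in steps 1 and 2. In particular, step 2 requires showing that the flow value equals the net flow across any $x$-side set, which is a conservation-summation argument. Step 1 requires careful cycle removal.

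A flow-free alternative is induction on $|E|$. Pick an edge $e$. If $\kappa$ is unchanged in $G-e$, apply induction directly. Otherwise $e$ lies in some minimum cut $C$, and we contract the two sides of $C$ and glue the resulting path systems. The flow route avoids that delicate gluing, so we would take it.
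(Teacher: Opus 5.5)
Your proof is correct. Note that the paper itself gives no proof of this statement: Menger's theorem is only cited, as background for contrasting the static minimum $s$--$t$ cut with the dynamic cut arising in sabotage games. Your argument therefore has to stand on its own, and it does.

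The counting argument for $\pi(x,y)\le\kappa(x,y)$ is the standard one. The unit-capacity Ford--Fulkerson route for the converse is sound:
\begin{itemize}
\item augmentations keep the flow $\{0,1\}$-valued;
\item following unused flow-carrying arcs from $x$, shortcutting or cancelling cycles, yields $f$ edge-disjoint simple $x$--$y$ paths;
\item the residual-reachable set $V'$ gives a cut of exactly $f$ arcs, since an unsaturated arc leaving $V'$, or a flow-carrying arc entering $V'$, would enlarge $V'$.
\end{itemize}

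Compared with the purely combinatorial inductive proofs (Menger's original statement concerns vertex-disjoint paths, and the edge version then follows, e.g.\ via line graphs), your route costs one lemma: the flow value equals the net flow out of every vertex set containing $x$ but not $y$. In exchange it works directly for the directed graphs the paper uses. It is also constructive and polynomial-time, and it produces a minimum cut and a maximum path family at once, which fits the paper's later appeal to max-flow computations and Gomory--Hu trees.

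Your reading of ``disconnects'' as ``makes $y$ unreachable from $x$'' is the right one here. The paper's definition of the minimum $s$--$t$ cut has $s$ and $t$ swapped, evidently a slip.

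You are also wise not to rely on the fallback induction. As sketched, contracting the two sides of a minimum cut containing $e$ need not shrink the graph when that cut is the trivial one at $x$ or at $y$. So $e$ would have to be chosen with care, and the contracted graphs are multigraphs.
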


Such minimum $s-t$ cuts are useful when working with static $s$ which can be seen as unmoving runner. If however the runner moves (as she should), in some cases she can `escape' the execution of a static min-cut before demon manages to complete it. So, a question of \emph{dynamic minimum $s{-}t$ cut} appears: What is the minimal number of rounds demon must play in order to prevent the runner to be able to reach the goal? To see that the standard notion of (static) minimum $s{-}t$ cut and the sabotage-based notion of the dynamic minimum $s{-}t$ cut are different, consider the following example.
 
 \begin{example}
Let us play a turn-based RSG $(G,s,t)$ on the graph $G$ depicted below. Firstly, note that the minimum static $s{-}t$ cut of the graph is of size $2$ and contains the edges $\{(s,u),(s,w)\}$. A naive demon could think that removing these edges one by one would allow him to win the game in two moves. Unfortunately, runner is allowed to move in-between the demon's moves, so she would manage to escape the static min-cut by either moving to $u$ or $w$ in its first move. The min-cut from both $u$ or $w$ to $t$ is of size $3$, so the demon has underestimated the number of edges to be deleted, and must reconsider which edge should be cut. 
There are in fact three dynamic minimal $s{-}t$ cuts of the graph (depending on how the runner moves): $\{\{(s,u),(u,v),(u,t)\}, \{(s,w),(w,v),(w,t)\},\{(v,t),(u,t),(w,t)\}\}$ and the demon cannot expect to win in less than three rounds. 
    \begin{center}
    
\begin{tikzpicture}[
    node distance=.5cm,
    every node/.style={circle, draw, thick, minimum size=5mm},
    every edge/.style={draw, thick}
]

\node (v) {$v$};
\node (u) [above=of v] {$u$};
\node (w) [below=of v] {$w$};
\node (s) [left=3cm of v] {$s$};
\node (t) [right=3cm of v] {$t$};

\draw[->] (s) -- (u);
\draw[->] (s) -- (w);

\draw[<->] (u) -- (v);
\draw[<->] (v) -- (w);

\draw[->] (u) -- (t);
\draw[->] (v) -- (t);
\draw[->] (w) -- (t);

\end{tikzpicture}
    \end{center}
\end{example}

The set of edges mentioned last in the enumeration in the Example above is of particular importance. Unlike in the setting of multi-graphs, which are often studied in the context of sabotage games, in standard graphs demon always has a winning strategy in the sabotage game: remove an arbitrary edge (or do nothing) until the runner is at some $v$ that is one step away from the goal vertex $v_g$, then remove $(v,v_g)$; and repeat that procedure as long as the runner's position is disconnected from the goal. To minimize the number of removals that disconnect the goal vertex from the rest of the graph, it is enough if demon focuses on only the edges that end at $v_g$. Even in that case however, the order of removals makes a difference, and must depend on the position of the runner.

To capture the dynamic $s{-}t$ cuts ($v_0{-}v_g$ cuts, in our case) we are interested in those $v_0$-plays that validate the formula $F \langle\!\langle \emptyset \rangle\!\rangle G\neg g$, which states that there is a point at which in all futures the goal will never be reached (equivalently stated as $\top U (\langle\!\langle \emptyset \rangle\!\rangle G\neg g$)). With the parametrized until-operator we can put a time-stamp on that moment, and require that it is obtained by the demon as soon as possible. Demon's strategy (set of plays) corresponding to the minimal dynamic $v_0{-}v_g$ cut then can be expressed in the following way.
\begin{definition}
The minimal $\textup{\texttt{tb}}$-dynamic $v_0{-}v_0$ cut of the graph $G=(V,E)$ is $k$ iff $v_0=s$ and $v_g=t$ and $k$ is the smallest such that
$\mathbb{M}^{R\textup{\texttt{tb}}},((E,v_0),r)\models \langle\!\langle \{d\} \rangle\!\rangle (\top U_k (\langle\!\langle \emptyset \rangle\!\rangle G\neg g))$.
\end{definition}
Let us mention, that a number of classical results address variants of the minimum $s-t$ cut problem where the source (or the pair $s,t$) is not fixed in advance. The classical solution is the \emph{Gomory--Hu tree} \cite{gomory1961multi}, which represents all pairwise minimum $s-t$ cuts of an undirected graph using only $n-1$ maximum-flow computations. Later work by Gusfield simplified the construction and analysis of such cut trees \cite{gusfield1990very}. In addition, research on dynamic graph algorithms studies data structures that maintain connectivity and cut-related information while the graph changes or while queries for arbitrary terminal pairs are issued \cite{holm2001poly}. Our ATL$^\ast$-based rendering of Sabotage Games brings logic and the algorithmic graph theory closer, and we plan to deepen this connection in a follow-up work.

\paragraph{Angelic Sabotage Games}



\vspace{-0.2cm}
To make a proper use of the power of ATL$^\ast$ it would be interesting to consider more complex infinite sabotage games. This can be done if apart from deletion of edges, we allow also their addition. In fact, many authors allow also for a `positive' type of dynamics of the graph (see, e.g., \cite{Baltag:2022aa}). \emph{Angelic sabotage games}, apart from runner and demon, also include an agent of positive change---a builder of edges. The following could be one example of such an extension.

\begin{definition}\label{def:angelic_play}
A \emph{strictly turn-based angelic sabotage play} is a (possibly infinite) sequence of game-states $s^0,\ldots, s^n$, such that $s^0=(E^0,v^0)$ is given by $(E,v_0)$, and for any $k$ s.t. $0<k\leq n$: $$s^{k}=
   \begin{cases}
         (E^{k-1},v^{k}),\text{  s.t.  } (v^{k-1},v^{k})\in E^{k-1} & \text{if } k\equiv 1 \pmod{3};\\
    (E^{k-1}\setminus \{(x,y)\},v^{k-1}),\text{ s.t. }(x,y)\in E^{k-1} & \text{if } k\equiv 2 \pmod{3};\\
          (E^{k-1}\cup \{(x,y)\},v^{k-1}),\text{  s.t.  } (x,y)\in V^2\setminus E^{k-1} & \text{if } k\equiv 0 \pmod{3}.\\
   \end{cases}$$
   \end{definition}
The winning conditions for runner and demon in RSG and LSG are as before, while the angel can be most naturally seen as runner's ally, but various coalitions of players with various incentives can be studied, similarly to the approach presented in Tab.~\ref{tab:comparison_SML_ATL}. The definitions of the $\textup{\texttt{tb}}$, $\textup{\texttt{con}}$, and $\textup{\texttt{gen}}$ angelic sabotage game structures are straight-forward extensions of the ones in Def.~\ref{def:TurnSabotageGameStructure},\ref{def:concurrentSabotageGameStructure}, and \ref{def:generalSabotageGameStructure}. The general difference with the non-angelic kind of sabotage game structures is that the play can recover from breaking the connectivity of the graph and from the runner finding themselves in a dead-end vertex. Since the angelic plays are infinite, we can consider more complex properties, in particular the requirement that a coalition can enforce visiting the goal vertex infinitely many times, $\langle\!\langle C \rangle\!\rangle GFg$, a property that, due to nested temporal operators cannot be expressed in simple ATL, and requires ATL$^\ast$.

\section{Knowledge in Sabotage Games}\label{sec:atel}

To capture cooperative strategic behavior under uncertainty an epistemic extension of ATL was proposed, namely the Alternating-time Temporal Epistemic Logic (ATEL, \cite{Hoek:2003}). It extends the language with knowledge modality $K$ and with the group knowledge modalities $E$ (everybody knows) and $C$ (common knowledge). Combining the strategic operators of ATL with the epistemic operators enables us to express statements about what coalitions of agents can enforce through their joint strategies, while explicitly accounting for what agents know about the game-state and the actions of others. ATEL extends ATL with the following formulae: $K_a \varphi$ ($a$ knows that $\varphi$), $E_\Gamma \varphi$ (everyone in $\Gamma$ knows that $\varphi$) and $C_\Gamma \varphi$ (it's common knowledge among $\Gamma$ that $\varphi$), where $a\in \mathbb{AG}$ and $ \Gamma\subseteq \mathbb{AG}$. 

Interpreting the above formulae in sabotage game models requires enriching them with accessibility relations for the agents. We will refer to such epistemic sabotage game models with $\mathbb{E}^{\textup{\texttt{y}}\textup{\texttt{x}}}$, with $\textup{\texttt{y}}\in\{\textup{\texttt{L}},\textup{\texttt{R}}\}$ and $\textup{\texttt{x}}\in\{\textup{\texttt{tb}},\textup{\texttt{con}},\textup{\texttt{gen}}\}$. The accessibility relations ${\sim_i}\subseteq S^\textup{\texttt{x}}\times S^\textup{\texttt{x}}$, specify for each $i\in\mathbb{AG}$ the scope of their uncertainty in a given game-state, by relating it to those that the agent can not distinguish, given her knowledge, from the current game-state.\footnote{We hence specify the uncertainty of agents to range over game-states (as done e.g. in \cite{Hoek:2003}), rather than on histories (as in \cite{Benthem:2009}). Since classical sabotage games are \emph{history-free}, we can say quite a lot following this assumption.} We will assume the accessibility relations to be equivalence relations. The semantics of the above operators is defined as usual, with the epistemic formulas taken to be state formulas: $\mathbb{M}, s\models K_a\varphi \text{ iff } \text{for all } t, \text{ s.t. }  s \sim_a  t, M, t \models \varphi$; $\mathbb{M}, s\models E_\Gamma\varphi \text{ iff } \text{for all } t, \text{ s.t. }  s \sim_\Gamma^E t, \, M, t \models \varphi$; $\mathbb{M}, s \models C_\Gamma\varphi \text{ iff } \text{for all } t, \text{ s.t. }  s \sim_\Gamma^C t,  M, t \models \varphi$,
where $\sim_\Gamma^E = \bigcup_{a \in \Gamma}\sim_a$, and $\sim_\Gamma^C$ is the transitive closure of $\sim_\Gamma^E $. In the original ATEL \cite{Alur:2002}, the epistemic part was simply added `on top of' the existing ATL semantics, and the definition of the strategies didn't take into account the actual epistemically-based ability to execute them. This simple approach allows talking about some simple aspects of knowledge in sabotage games.

The simplest approach is to assume that the have \emph{perfect information}, i.e., all agents see everything in a state: the edges present and the runner's position. Here we assume that $\sim_i$ for each $i\in \mathbb{AG}$ are identity relations. We can then say, following Prop.~\ref{prop:tb-demonwinning}, that if $((V,E),v_0,v_g)$ is RSG with $v_0\neq v_g$ and $(v_0,v_g)\notin E$, then we have that $\mathbb{E}^{R\textup{\texttt{tb}}},((E,v_0),r)\models E_{\mathbb{AG}}\langle\!\langle \{d\} \rangle\!\rangle G\neg g$. 



\emph{Imperfect information} can be imparted on agents in many different ways, for instance we could assume that they can only see immediate neighbors of the runner's current position, i.e, $(E,v) \sim_i (E',v')$ iff $|E(v)|=|E'(v')|$, where $E(v)=\{e\in E\mid \pi_1(e)=v\}$.\footnote{This specification is somewhat inconsistent with demon's global powers over $G$. This discrepancy could be remedied by restricting the function $act$ in the sabotage game structures, so that $act(d,(E,v))=\{e\in E'\mid \pi_1(e)= v\}$, by making the demon local \cite{Kvasov:2015aa,Kvasov:2016aa,Aucher:2018}).} In such games we an agent might have a winning strategy but she might not know that.
\begin{example}\label{ex:local_symmetric}
Let us assume that a turn-based reachability sabotage game between runner and demon is played on the following graph $G$. 
\begin{center}
\fbox{
\begin{tikzpicture}[auto,node distance=1cm]
\node[] (0) at (0,0) {$v$};
\node[] (1) at (1.5,0){$u$};
\node[] (2) at (3,0){$v_g$};
\path    (0) edge [->] node [] {} (1);
\path    (1) edge [->] node [] {} (2);
\end{tikzpicture}}
\end{center}
Clearly, if the starting node is $u$, runner has a winning strategy in RSG: the first move she makes will lead her to the goal vertex. However, given the fact that she can only observe the number of edges leading out of the current node, she can't be sure if she is not in fact in the vertex $v$. If that was the case, she would not be able to win: after moving to $u$, demon can remove the edge $(u,g)$, and the runner loses the game. We then have that  $\mathbb{E}^{R\textup{\texttt{tb}}}(G,v_g),((E,u),r)\models \langle\!\langle \{r\} \rangle\!\rangle F g\wedge \neg K_r \langle\!\langle \{r\} \rangle\!\rangle Fg$.
\end{example}
The above observation might make the reader uneasy---how can we claim something is an agent's strategic range, but they do not know it? This effect is the topic of an important discussion in existing literature on logics of games, which is even more pronounced in the example below.

Let us give runner perfect information, but obscure from demon the location of runner after the first step is made, i.e., for any two $v_0$-plays $\lambda$ and $\lambda'$, and any $i>0$, $\lambda[i]\sim_d \lambda'[i]$, if  $\lambda[i]=((E_1,v),a)$,$\lambda'[i]=((E_2,v'),a)$ and $E_1=E_2$. In this setting, demon might have a winning strategy, but since he cannot know where runner is at a given time, he might not be able to apply it during the play. 

\begin{example}\label{ex:asymmetric}
Consider the graph $G$ below. Runner starts at $v_0$ and both agents know that, and they know the structure of the graph. Runner makes a move hidden to demon. 
\begin{center}
\fbox{
\begin{tikzpicture}[auto,node distance=1cm]
\node[] (0) at (0,0) {$v_0$};
\node[] (1) at (1.5,0.3){$u$};
\node[] (2) at (1.5,-0.3){$w$};
\node[] (3) at (3,0){$g$};
\path    (0) edge [->] node [] {} (1);
\path    (0) edge [->] node [] {} (2);
\path    (1) edge [->] node [] {} (3);
\path    (2) edge [->] node [] {} (3);
\end{tikzpicture}}
\end{center}
Demon now knows that runner is either in $u$ or in $w$. This is not enough information to apply the strategy that would guarantee winning in the turn-based game with perfect information, i.e., blocker has to chose one of the $(u,g)$ or $(w,g)$, but might get unlucky with that choice, and the runner will still be able to get to $g$. To be able to express this in ATEL, we have to augment the semantics to account for demon's `belief-states' ($\sim_d$ abstraction classes). Such adjustment was first proposed in \cite{Jamroga2003Remarks}, to account for imperfect information strategies. Under this new `imperfect' semantics for the turn-based sabotage game on $G$, we have that $\mathbb{E^\textup{\texttt{Rtb}}}(G,v_0),((E,v_0),r)\not\models_{imp} \langle\!\langle \{d\} \rangle\!\rangle Fg$, while on the original ATEL semantics we would, counterintuitively, get that $\mathbb{E^\textup{\texttt{Rtb}}}(G,v_0),((E,v_0),r)\models K_d\langle\!\langle \{d\} \rangle\!\rangle Fg$. 
\end{example}

\section{Conclusions and future work}\label{sec:conclusion}\label{sec:futureWork}


We examined various kinds of sabotage games from the perspective of ATL. We have studied the classical reachability sabotage game, and introduced liveness sabotage games. Apart from the standard turn-based version, we introduced concurrent sabotage games, which are very natural for the ATL framework. We characterized the existence of winning strategies in these kinds of games. We further related those characterizations to the graph-theoretical problem of minimal $s-t$ cut, and discussed a dynamic version of that problem. We also connected to angelic sabotage games, in which edges can be built. Finally, we have shown how epistemic extensions of ATL can account for knowledge in sabotage games, a need that was highlighted in a recent survey by van Benthem and Liu \cite{vanBenthemLiu2025}.

There are many possible follow-up directions of this work: studying possible extensions of classical sabotage games, such as multiple runners, distributed goals, and infinite sabotage games (especially in the context of angelic games). We are interested in strengthening the relationship with algorithmic graph theory, by linking to various kinds of dynamic min-cut and max-flow problems. Moreover, extending our preliminary epistemic account od Sabotage Games holds special promise. Sabotage Games can be viewed as a natural playground for various notions of strategic ability, e.g., making use of positional and memory-based strategies would allow comparing the power of the players on another level.

\bibliographystyle{eptcs}
\bibliography{bibliography}

\newpage
\section*{Appendix}

\noindent\textbf{Proposition 2.5}
Let $G=(V,E)$ and $b\in \mathbb{N}^+$. Runner has a winning strategy in $LSG=((V,E),v_0,b)$ iff $M((E,v_0)),v_0\models \gamma_b$, with $\gamma_i$ (for $i\in \mathbb{N}$) given by:
  $ \gamma_1:=\lozenge\top, \ \gamma_{n+1}:=\lozenge \blacksquare\gamma_n.$

\begin{proof}
By induction on $b$. Base case, $b=1$. The following are equivalent: 
(1.) runner has a winning strategy in the $LSG=(V,E,v_0,1)$;
(2.) there is a $v_1 \in V$, such that $(v_0,v_1)\in E$ (runner can use it in the present round);
(3.) $M((E,v_0)),v_1\models \top$;
(4.) $M((E,v_0)),v_0\models \lozenge \top$; 
(5.) $M((E,v_0)),v_0\models \gamma_1$.

Inductive hypothesis: runner has a winning strategy in $LSG=(V,E, v_0, n)$ iff $M((E,v_0)),v_0\models \gamma_n$. Take $b=n+1$. 
($\Rightarrow$) Assume that the runner has a winning strategy in LSG $(V,E, v_0, n+1)$. Then there is $v'\in V$, such that $(v,v')\in E$ and for any $(x,y)\in E$, runner has a winning strategy in the LSG $(V,E\setminus\{x,y\},v',n)$. By the inductive hypothesis: for any $(x,y)\in E$, $M((E\setminus\{(x,y)\},v')),v'\models \gamma_n$. Then, by the semantics of $\blacksquare$, $M((E, v_0)),v'\models \blacksquare\gamma_n$. Finally, by our choice of $v'$ and the semantics of $\lozenge$, we get that $M((E, v_0)),v_0\models \lozenge\blacksquare\gamma_n$, i.e., $M((E, v_0)),v_0\models \gamma_{n+1}$. 
($\Leftarrow$) Assume that $M((E, v_0)),v_0\models \gamma_{n+1}$, i.e., $M((E, v_0)),v_0\models \lozenge\blacksquare\gamma_{n}$. Then, there is a $v'\in V$ such that $(v,v')\in E$ and $M((E, v_0)),v'\models \blacksquare\gamma_n$. Then for all $(x,y)\in E$, $M((E\setminus\{(x,y)\}, v_0)),v' \models \gamma_n$, which (by inductive hypothesis) means that for all $(x,y)\in E$, runner has a winning strategy in $(V,E\setminus\{(x,y)\}, v', n)$. But then the runner has a winning strategy in $LSG=(V,E,v_0, n+1)$ by moving first from $v_0$ to $v_1$. 
\end{proof}

\noindent\textbf{Proposition 3.5} 

\noindent(1.)~$Plays(\mathbb{S}^{\textup{\texttt{tb}}}(G))$ is a finite set of plays of finite length. 
\begin{proof}
Take $G=(V,E)$. We show that for any $\lambda\in Plays(\mathbb{S}^{\textup{\texttt{tb}}}(G))$, $length(\lambda)\leq 2|E|-1$. For contradiction assume that there is a $\lambda\in Plays(\mathbb{S}^{\textup{\texttt{tb}}}(G))$, s.t. $length(\lambda)\geq 2|E|$ and $\lambda[2|E|-1]=((E',v),a)$, for some $a\in\mathbb{AG}$. This means that $act^\textup{\texttt{tb}}(\lambda[2|E|-1])\neq \emptyset$, i.e., $E'\neq \emptyset$. But between $\lambda[0]$ and $\lambda[2|E|-1]$ demon must have made $|E|$ moves, i.e., deleted $|E|$ edges, since the game is strictly turn-based, so $E'=\emptyset$. Contradiction. Moreover, $Plays(\mathbb{S}^{\textup{\texttt{tb}}}(G))$ is a finite set because it corresponds to a set of finite (as we've just shown) sequences over a finite alphabet $E$.\end{proof}

\noindent(2.)~For any $\lambda \in Plays(\mathbb{S}^{\textup{\texttt{con}}}(G))$, there is a $\lambda'\in Plays(\mathbb{S}^{\textup{\texttt{gen}}}(G))$, s.t. for all $i\leq length(\lambda)$, $\lambda[i]= \lambda'[i]$.
\begin{proof}
Take a $\lambda\in Plays(\mathbb{S}^{\textup{\texttt{con}}}(G))$. We want to show that $\lambda\in Plays(\mathbb{S}^{\textup{\texttt{gen}}}(G))$. Firstly, $\lambda[0]\in S^\textup{\texttt{gen}}$ because $S^{\textup{\texttt{con}}}=S^{\textup{\texttt{gen}}}$. Secondly, for every $i>0$, $\lambda[i]=\delta^{\textup{\texttt{gen}}}(\lambda[i-1],(e_r,e_d))$ for some $(e_r,e_d)\in act^{\textup{\texttt{gen}}}(\lambda[i-1])$, because for any $s$, $act^{\textup{\texttt{con}}}(s)\subseteq act^{\textup{\texttt{gen}}}(s)$ and for any $s$ and $e,e'\in Act$, $\delta^{\textup{\texttt{con}}}(s,(e,e'))=\delta^{\textup{\texttt{gen}}}(s,(e,e'))$. 
\end{proof}

\noindent(3.)~$Plays(\mathbb{S}^{\textup{\texttt{con}}}(G))$ and $Plays(\mathbb{S}^{\textup{\texttt{gen}}}(G))$ are infinite and contain infinite plays.

\begin{proof}
For contradiction assume that $Plays(\mathbb{S}^{\textup{\texttt{con}}}(G))$ is a finite set, and that its longest play is of length $n\in N$. Take such a longest $\lambda$. Let us construct $\lambda'$ in the following way: $\lambda'[0]:=\lambda[0]$ and for $i\in\{0,\ldots,n-1\}$, $\lambda'[i+1]:=\lambda[i]$. It is clear that $length(\lambda')=n+1$. To see that $\lambda'\in Plays(\mathbb{S}^{\textup{\texttt{con}}}(G))$, consider that $\lambda'[0]\in S^{\textup{\texttt{con}}}$, $\lambda'[1]=\delta^{\textup{\texttt{con}}}(\lambda'[0],(e,e))$ for any choice of $e\in Act$, and for $i\in\{2,\ldots,n\}$, $\lambda'[i]=\delta^{\textup{\texttt{con}}}(\lambda'[i-1],(e_r,e_d))$, for $e_r,e_d$ such that $\lambda[i-1]=\delta^{\textup{\texttt{con}}}(\lambda[i-2],(e_r,e_d))$. We get a contradiction. To see that $Plays(\mathbb{S}^{\textup{\texttt{con}}}(G))$ contains an infinite play. Take an $e\in Act$, then $(E,\pi_1(e)),(E,\pi_1(e)),\ldots$ is a play in $Plays(\mathbb{S}^{\textup{\texttt{con}}}(G))$ generated by the infinite repetition of the action profile $(e,e)$. We conclude that $Plays(\mathbb{S}^{\textup{\texttt{con}}}(G))$ must be infinite and contains an infinite play, and so, by (2.), $Plays(\mathbb{S}^{\textup{\texttt{gen}}}(G))$ must be an infinite set and contain an infinite play.
\end{proof}


\noindent\textbf{Proposition 3.7}

\noindent(1.)~For any $\lambda\in Plays(\mathbb{S}^{\textup{\texttt{tb}}}(G),((E,v),r))$ and an odd $i< length(\lambda)$, $\mathbb{L}(\lambda[i])\neq \mathbb{L}(\lambda[i+1])$. 
\begin{proof}
In turn-based sabotage games, at odd $i$ in $((E,v),r)$-play $\lambda$ it is demon's turn to move, say that it removes $e$. Then $\mathbb{L}({\lambda(i)})\setminus \mathbb{L}({\lambda(i+1)})=q_e$. 
\end{proof}

\noindent(2.)~For any $\lambda \in Plays(\mathbb{S}^{\textup{\texttt{tb}}}(G))$, there is a $\lambda'\in Plays(\mathbb{S}^{\textup{\texttt{gen}}}(G))$, such that for all $i\leq length(\lambda)$, $\mathbb{L}(\lambda[i])= \mathbb{L}(\lambda'[i])$. 
\begin{proof}
Take $\lambda\in Plays(\mathbb{S}^{\textup{\texttt{tb}}}(G))$. W.l.o.g., assume that runner moves first, then, by Prop.~3.5.1., $\lambda$ is a finite sequence $((E^0,v^0),r),\ldots,((E^n,v^n),r)$. We claim that the required play is any $\lambda'$ such that $\lambda'[0,n]=(E^0,v^0),\ldots,(E^n,v^n)$. We need to show that for all $i\leq length(\lambda)$, $\lambda'[i+1]=\delta^{\textup{\texttt{gen}}}(\lambda'[i],\alpha)$, for some $\alpha \in act^{\textup{\texttt{gen}}}(\lambda'[i])$. In fact, for each $a\in \mathbb{AG}$ and for all $t=((E,v),a)\in S^{\textup{\texttt{tb}}}$ and $s=(E,v)\in S^{\textup{\texttt{gen}}}$, $act^{\textup{\texttt{tb}}}(t)\subseteq act^{\textup{\texttt{gen}}}(s)$ (for any state all action profiles allowed in the turn-based game are allowed in the general game), and that for all $\alpha \in act^{\textup{\texttt{tb}}}(((E,v),a))$ and $\delta^{\textup{\texttt{tb}}}(((E,v),a),\alpha))=(((E'v'),a'))$ then $\delta^{\textup{\texttt{gen}}}((E,v),\alpha))=((E',v'))$. Finally, observe that for all $a\in\mathbb{AG}$, $((E,v),a)\in S^{\textup{\texttt{tb}}}$ and $(E,v)\in S^{\textup{\texttt{gen}}}$, $\mathbb{L}(((E,v),a))=\mathbb{L}((E,v))$ (structural labeling $\mathbb{L}$ only takes into account the graph structure, edges and runner's position, of the game-state).
\end{proof}


\noindent\textbf{Proposition 3.13}
Let $G=(V,E)$ be a graph. Runner has a winning strategy in a turn-based RSG $(G,v_0,v_g)$ \ iff \ $\mathbb{M}^{R\textup{\texttt{tb}}},((E,v_0),r)\models \langle\!\langle \{r\} \rangle\!\rangle F g$.

\begin{proof} The following are equivalent:
\begin{enumerate}
\item $\mathbb{M}^{R\textup{\texttt{tb}}},((E,v_0),r)\models \langle\!\langle \{r\} \rangle\!\rangle F g$ 
\item there is an $r$-strategy $str_{r}$, s.t. $\mathbb{M}^{R\textup{\texttt{tb}}},\lambda\models F g$ for all $\lambda\in Plays(\mathbb{S}^{\textup{\texttt{tb}}}(G),v_0,str_{r})$;

\item there is an $r$-strategy $str_{r}$, s.t. $\mathbb{M}^{R\textup{\texttt{tb}}},\lambda\models \top U g$ for all $\lambda\in Plays(\mathbb{S}^{\textup{\texttt{tb}}}(G),v_0,str_{r})$;

\item there is an $r$-strategy $str_{r}$, s.t. for all $\lambda\in Plays(\mathbb{S}^{\textup{\texttt{tb}}}(G),v_0,str_{r})$ there is $i\geq 0$, s.t. $\mathbb{M}^{R\textup{\texttt{tb}}},\lambda[i]\models g$;

\item there is an $r$-strategy $str_{r}$, s.t. for all $\lambda\in Plays(\mathbb{S}^{\textup{\texttt{tb}}}(G),v_0,str_{r})$ there is $i\geq 0$, s.t. $g\in \mathbb{L}(\lambda[i])$;

\item there is an $r$-strategy $str_{r}$, s.t. for all $\lambda\in Plays(\mathbb{S}^{\textup{\texttt{tb}}}(G),v_0,str_{r})$ there is $i\geq 0$, s.t. $\lambda[i]=((E,v),a)$ for some $a\in \mathbb{A}$ with $v=v_g$;

\item there is a function $str_r:S^{\textup{\texttt{tb}}}\rightarrow Act^{\textup{\texttt{skip}}}$ with $str_r(s)\in act^{\textup{\texttt{tb}}}(r,s)$, s.t. for all $\lambda\in Plays(\mathbb{S}^{\textup{\texttt{tb}}}(G),v_0,str_{r})$ there is  $i\geq 0$, s.t. $\lambda[i]=((E,v),a)$ for some $a\in \mathbb{A}$ with $v=v_g$;

\item there is a function $str_r:S^{\textup{\texttt{tb}}}\rightarrow Act^{\textup{\texttt{skip}}}$ with $str_r(s)\in act^{\textup{\texttt{tb}}}(r,s)$, s.t. for all $\lambda\in Plays(\mathbb{S}^{\textup{\texttt{tb}}}(G),v_0)$ such that $\lambda[j+1]\in \{\delta^{\textup{\texttt{tb}}}(\lambda[j], \alpha)\mid \alpha\in act^{\textup{\texttt{tb}}}(\lambda[j]) \text{ and } str_r(\lambda[j])\sqsubseteq \alpha\}$ there is an $i\geq 0$, s.t. $\lambda[i]=((E,v),a)$ for some $a\in \mathbb{A}$ with $v=v_g$;

\item there is a function $str_r:S^{\textup{\texttt{tb}}}\rightarrow Act^{\textup{\texttt{skip}}}$ with $str_r(s)\in act^{\textup{\texttt{tb}}}(r,s)$, s.t. for all $\lambda\in Plays(\mathbb{S}^{\textup{\texttt{tb}}}(G),v_0)$ such that for even $j<length(\lambda)-1$, $\lambda[j+1]=\delta^{\textup{\texttt{tb}}}(\lambda[j],(str_r(\lambda[j]), \texttt{skip}))$ and for odd $j<length(\lambda)-1$, $\lambda[j+1]\in\{\delta^{\textup{\texttt{tb}}}(\lambda[j],(\texttt{skip},e))\mid e\in Edges(\lambda[j])\}$ there is an $i\geq 0$, s.t. $\lambda[i]=((E,v),a)$ for some $a\in \mathbb{A}$ with $v=v_g$;

\item there is a function $win:\mathcal{S}(G)\rightarrow E$ that runner can apply at her choice point $s^k$ such that whichever remaining edge is removed from, runner retains the connectivity to the goal from $\pi_2(win(s))$ in $s^{k+1}$. Namely: $win((E',v))=str_r((E',v),r)$. So, runner has a winning strategy in RSG $(G,v_0,v_g)$.
\end{enumerate}
\end{proof}

\noindent\textbf{Proposition 3.15}
Let $((V,E),v_0,v_g)$ be a RSG. If  $\mathbb{M}^{R\textup{\texttt{con}}},(E,v_0)\models \langle\!\langle \{r,d\} \rangle\!\rangle  F g$, then $\mathbb{M}^{R\textup{\texttt{tb}}},((E,v_0),r)\models \langle\!\langle \{r,d\} \rangle\!\rangle  F g$.

\begin{proof}
Assume that $\mathbb{M}^{R\textup{\texttt{con}}},(E,v_0)\models \langle\!\langle \{r,d\} \rangle\!\rangle  F g$, which means that  there is pair of strategies $(str_r, str_d)$, with $str_a:S^{\textup{\texttt{con}}}\rightarrow Act$, and $str_a\in act^{\textup{\texttt{con}}}(s)$ (for $a\in\mathbb{AG}$), s.t. for all $\lambda\in Plays(\mathbb{S}^{\textup{\texttt{con}}}(G),v_0)$ such that $\lambda[j+1]=\delta^{\textup{\texttt{con}}}(\lambda[j], (str_r(\lambda[j]),str_d(\lambda[j])))$, there is an $i\geq 0$, s.t. $\lambda[i]=(E,v)$ with $v=v_g$. Let us define $str'_r, str'_d$ with $str'_a:S^{\textup{\texttt{tb}}}\rightarrow Act^{\texttt{skip}}$, and $str'_a\in act^{\textup{\texttt{tb}}}(s)$ (for $a\in\mathbb{AG}$) in the following way:
\begin{itemize}
\item $str'_r(((E,v),r)):=str_r((E,v))$ and $str'_r(((E,v),d)):=\texttt{skip}$; 
\item $str'_d(((E,v),d)):=str_d((E,v))$ and $str'_d(((E,v),r)):=\texttt{skip}$. 
\end{itemize}
First, note that $str'_a:S^{\textup{\texttt{tb}}}\rightarrow Act^{\texttt{skip}}$, and $str'_a\in act^{\textup{\texttt{tb}}}(s)$ (for $a\in\mathbb{AG}$). We also have that for all $\lambda\in Plays(\mathbb{S}^{\textup{\texttt{tb}}}(G),v_0)$, such that $\lambda[j+1]=\delta^{\textup{\texttt{tb}}}(\lambda[j], (str'_r(\lambda[j]),str'_d(\lambda[j])))$, there is an $i\geq 0$, s.t. $\lambda[i]=((E,v),a)$ for some $a\in \mathbb{A}$ with $v=v_g$.
\end{proof}

\noindent\textbf{Proposition 3.16}
      Let $G=(V,E)$ be a graph. Runner has a winning strategy in a turn-based LSG $(G,v_0,b)$ \ iff \ $\mathbb{M}^{R\textup{\texttt{tb}}},((E,v_0),r)\models \langle\!\langle \{r\} \rangle\!\rangle \top  U_{2b}  (X\top)$.  
\begin{proof}
Analogous to the proof of Prop.~3.13.
\end{proof}



\subsection*{Examples of rooted sabotage game structures}

\begin{figure}[H]
    \centering
    \begin{tikzpicture}[
node distance=1.5cm,
every node/.style={font=\small},
g/.style={draw, rectangle, minimum width=1.7cm, minimum height=1.7cm},
v/.style={circle, inner sep=1pt},
redv/.style={circle, draw=red, inner sep=1pt}
]

\newcommand{\gadget}[5]{
\node[g] (#1) at #2 {};

\node[v] (#1v0) at ($(#1.center)+(0,0.5)$) {$v_0$};
\node[v] (#1v1) at ($(#1.center)+(-0.5,-0.3)$) {$v_1$};
\node[v] (#1v2) at ($(#1.center)+(0.5,-0.3)$) {$v_2$};

\ifnum#3=0
\node[redv] at (#1v0) {$v_0$};
\fi
\ifnum#3=1
\node[redv] at (#1v1) {$v_1$};
\fi
\ifnum#3=2
\node[redv] at (#1v2) {$v_2$};
\fi

\draw[->] (#1v0) -- (#1v1);
\draw[->] (#1v0) -- (#1v2);

\ifnum#5=1
\draw[->] (#1v2) -- (#1v1);
\fi
}

\gadget{root}{(0,0)}{0}{1}{1}

\gadget{g1}{(-4,-3)}{1}{0}{1}
\gadget{g2}{(-1,-3)}{1}{0}{0}
\gadget{g3}{(2,-3)}{2}{0}{0}
\gadget{g4}{(5,-3)}{2}{1}{1}

\gadget{g5}{(5,-6)}{1}{0}{1}

\draw[->] (root) -- (g1);
\draw[->] (root) -- (g2);
\draw[->] (root) -- (g3);
\draw[->] (root) -- (g4);
\draw[->] (g4) -- (g5);
\draw[->] (root) edge[loop above] (root);
\draw[->] (g4) edge[loop above] (g4);
\end{tikzpicture}
    \caption{Concurrent sabotage game structure rooted in $(E,v_0)$. Red circle marks runner's position in a game-state. Each transition corresponds to a concurrent action in a game-state. Note that whenever both agents choose the same edge in a joint concurrent move, the action is canceled, hence the reflexive arrows. It is easy to see there are finite and infinite plays}
    \label{fig:concurrentGraph}
\end{figure}

\begin{figure}[h]
    \centering
    \begin{tikzpicture}[
every node/.style={font=\small},
g/.style={draw, rectangle, minimum width=1.7cm, minimum height=1.7cm},
v/.style={circle, inner sep=1pt},
redv/.style={circle, draw=red, inner sep=1pt}
]

\newcommand{\gadget}[6]{
\node[g] (#1) at #2 {};

\node[v] (#1v0) at ($(#1.center)+(0,0.6)$) {$v_0$};
\node[v] (#1v1) at ($(#1.center)+(-0.6,-0.3)$) {$v_1$};
\node[v] (#1v2) at ($(#1.center)+(0.6,-0.3)$) {$v_2$};

\ifnum#3=0 \node[redv] at (#1v0) {$v_0$}; \fi
\ifnum#3=1 \node[redv] at (#1v1) {$v_1$}; \fi
\ifnum#3=2 \node[redv] at (#1v2) {$v_2$}; \fi

\ifnum#4=1 \draw[->] (#1v0) -- (#1v1); \fi
\ifnum#5=1 \draw[->] (#1v0) -- (#1v2); \fi
\ifnum#6=1 \draw[->] (#1v2) -- (#1v1); \fi
}

\gadget{r}{(0,0)}{0}{1}{1}{1}

\gadget{a}{(-2,-3)}{1}{1}{1}{1}
\gadget{b}{(2,-3)}{2}{1}{1}{1}

\draw[->] (r) -- (a);
  \draw[->]        (r) -- (b);

\gadget{a1}{(-5,-6)}{1}{1}{0}{1}
\gadget{a2}{(-3,-6)}{1}{1}{0}{1}
\gadget{a3}{(-1,-6)}{1}{1}{1}{0}

\draw[->] (a) -- (a1);
\draw[->]          (a) -- (a2);
 \draw[->]         (a) -- (a3);

\gadget{b1}{(1,-6)}{2}{0}{0}{1}
\gadget{b2}{(3,-6)}{2}{1}{1}{0}
\gadget{b3}{(5,-6)}{2}{0}{0}{1}

\draw[->] (b) -- (b1);
 \draw[->]          (b) -- (b2);
  \draw[->]         (b) -- (b3);

\gadget{c1}{(1,-9)}{1}{1}{0}{1}
\gadget{c2}{(5,-9)}{1}{1}{0}{1}

\draw[->] (b1) -- (c1);
 \draw[->]          (b3) -- (c2);

\gadget{d1}{(1,-12)}{1}{1}{0}{0}
\gadget{d2}{(3,-12)}{1}{1}{0}{1}
\gadget{d3}{(5,-12)}{1}{1}{0}{0}

\draw[->] (c1) -- (d1);
   \draw[->]        (c1) -- (d2);
   \draw[->]        (c2) -- (d2);
    \draw[->]       (c2) -- (d3);

\end{tikzpicture}
    \caption{Turn-based sabotage game structure rooted in $(E,v_0)$. Red circle marks the runner's position in a state. The control over the levels of the structure alternates between runner and demon, starting with runner at the top}
    \label{fig:turnBasedGraph}
\end{figure}

\end{document}